\newtheorem{theorem}{Theorem}[section]
\newtheorem{lemma}[theorem]{Lemma}  
\newtheorem{proposition}[theorem]{Proposition} 
\newtheorem{corollary}[theorem]{Corollary}     
\newtheorem{claim}[theorem]{Claim}
\theoremstyle{definition}
\newtheorem{definition}[theorem]{Definition}
\newtheorem{example}[theorem]{Example}
\newtheorem{remark}[theorem]{Remark}
\newcommand{\Z}{\mathbf{Z}}
\newcommand{\Cm}{C_{\!\M}}
\newcommand{\F}{\mathcal{F}}
\newcommand{\J}{\mathcal{J}}
\newcommand{\I}{\mathcal{I}}
\newcommand{\laminar}{\mathcal{L}}
\newcommand{\zero}{{\bf 0}}
\newcommand{\M}{\mathscr{M}}
\newcommand{\C}{\mathcal{C}}
\begin{document}
\title{Envy-free Matchings with Lower Quotas}
\author{
Yu Yokoi
\thanks{National Institute of Informatics, Tokyo 101-8430, Japan. 
E-mail: {\tt yokoi@nii.ac.jp}.}
}
\date{\empty}
\maketitle
\setstretch{1.1}
\vspace{-3mm}
\begin{abstract}
	While every instance of the Hospitals/Residents problem admits a stable matching,
	the problem with lower quotas (HR-LQ) has instances with no stable matching.
	For such an instance, we expect the existence of an envy-free matching, 
	which is a relaxation of a stable matching preserving a kind of fairness property.
	
	In this paper, we investigate the existence of an envy-free matching in several settings,
	in which hospitals have lower quotas and not all doctor-hospital pairs are acceptable. 
	We first provide an algorithm that decides whether a given HR-LQ instance has an envy-free matching or not. 
	Then, we consider envy-freeness in the Classified Stable Matching model due to Huang (2010),
	i.e., each hospital has lower and upper quotas on subsets of doctors.
	We show that, for this model, 
	deciding the existence of an envy-free matching is \mbox{NP-hard} in general, 
	but solvable in polynomial time if quotas are paramodular. 
\end{abstract}

\section{Introduction}
Since the seminal work of Gale and Shapley \cite{GS62}, 
the {\em Hospitals/Residents problem} (HR, for short), or the {\em College Admission problem}, has been studied extensively 
\cite{GIbook, Manlovebook, RS92}.                                                    
They proposed an algorithm that finds a stable matching in linear time for every instance. 
In this problem, each hospital has an upper quota for the number of doctors assigned to it.
In some applications, each hospital also has a lower quota for the number of doctors it receives.
That is, we want to consider the Hospitals/Residents problem with lower quotas (HR-LQ, for short). 
Unfortunately, for HR-LQ, we cannot ensure the existence of a stable matching.
However, it is easy to decide whether there is a stable matching or not for a given HR-LQ instance, 
because the number of doctors assigned to each hospital is identical for any stable matching 
(according to the well-known Rural Hospitals Theorem \cite{GS85, Roth84b, Roth84, Roth86}).

When a given HR-LQ instance has no stable matching,
one natural approach is to weaken stability concept while preserving some kind of fairness.
{\em Envy-freeness} \cite{WR16} (also called {\em fairness} in the school choice literature \cite{Fragiadakis16, Goto16}) 
of matchings is a relaxation of stability obtained by giving up efficiency.
Similarly to stability, envy-freeness forbids the existence of a doctor who has justified envy toward some other doctor,
but it tolerates the existence of a doctor who claims a hospital's vacant seat.
The importance of envy-freeness and its variants has recently been 
recognized in the context of constrained matching 
\cite{Fragiadakis16, Goto16, KK14, KK16, Ehlers14}, 
and structural properties of envy-free matchings were investigated in \cite{WR16}.

Envy-free matchings naturally arise when we find a matching in the following ad hoc manner. 
For an HR-LQ instance, suppose that we find a stable matching while disregarding the lower quotas, 
and that the obtained matching does not meet the lower quotas. 
Let us reduce the upper quotas of hospitals that receive many doctors, 
and again find a stable matching while disregarding the lower quotas, and repeat.
If we find a stable matching that meets the lower quotas after repeating such adjustments, 
then the obtained matching is an envy-free matching of the original instance (see Proposition~\ref{prop:HRtruncate}).

Because an envy-free matching is a relaxation of a stable matching, it is more likely to exist. 
Indeed, if all doctor-hospital pairs are acceptable and the sum of lower quotas of all hospitals does not exceed the number of doctors, 
then we can ensure the existence of an envy-free matching. 
(This follows from the results of Fragiadakis et al. \cite{Fragiadakis16}).
However, if not all pairs are acceptable, then even an envy-free matching may fail to exist. 
Moreover, deciding the existence of an envy-free matching 
is not so simple because envy-free matchings have different sizes
unlike stable matchings.
\vspace{-3mm}
\paragraph*{Our Contribution}
In this paper, we study envy-free matchings for the HR-LQ model and its generalizations.
In our models, not all doctor-hospital pairs are acceptable (i.e., preference lists are incomplete).

We first investigate envy-free matchings in the setting of HR-LQ.
We provide the following characterization of the existence of an envy-free matching.
Let $I$ be a given HR-LQ instance and let $I'$ be an HR instance obtained from $I$ by removing lower quotas and replacing upper quotas with the original lower quotas. 
We prove that $I$ has an envy-free matching if and only if every hospital is full in a stable matching of $I'$ (Theorem~\ref{thm:HR1}).
Combined with the rural hospitals theorem, this characterization yields an efficient algorithm
to decide the existence of an envy-free matching for an HR-LQ instance.
That is, we can decide it by finding a stable matching for the HR instance whose upper quotas are 
the original lower quotas,
and checking whether all hospitals are full or not.

Next, we move to a generalized model, in which each hospital imposes an upper and a lower quota on each subset of doctors.
That is, we consider an envy-free matching version of Huang's {\em Classified Stable Matching} \cite{CCH10} (CSM, for short).
(See ``Related Works'' below for results on stable matchings of CSM and its generalizations.)
In Huang's original model, each hospital has a family of sets of doctors, called {\em classes}, and each class has an upper and a lower quota.
We formulate this setting by letting each hospital have a pair of set functions defined on the set of acceptable doctors. 
These two functions respectively represent upper quotas and lower quotas.
For this model, we show that it is NP-hard to decide the existence of an envy-free matching, 
even if the number of non-trivial quotas is linear (Theorem~\ref{thm:HR1}).
The proof is by a reduction from the NP-complete problem (3,B2)-SAT \cite{BKS03}.

Then, we provide a tractable special case of CSM.
We show that if the pair of lower and upper quota functions of each hospital is {\em paramodular} \cite{Frankbook}
(see Section~\ref{sec:g-mat} for the definition), 
then we can decide the existence of an envy-free matching in polynomial time.
This means that the problem is tractable if the family of acceptable doctor sets forms a generalized matroid for each hospital.
A {\em generalized matroid} \cite{Tardos85} (also called an {\em M$^{\natural}$-convex family} \cite{Murota16survey}) 
is a family of subsets satisfying a certain axiom called the exchange axiom.
It is known that a paramodular function pair defines a generalized matroid and vice versa.
Because constraints defined on a laminar (or hierarchical) family yield a generalized matroid,
our tractable special case includes a case in which each hospital defines quotas on a laminar family of doctors.
\vspace{-3mm}
\paragraph*{Related Works}
Recently, the study of matching models with lower quotas has developed substantially 
\cite{ACGMM16, FK16, Goto16, HIM11, HIM16, CCH10, Manlovebook, Mnich16}.
The Hospitals/Residents problem with lower quotas (HR-LQ) was first studied by Hamada et al. \cite{HIM11, HIM16},
who considered the minimization of the number of blocking pairs subject to upper and lower quotas. 
They showed the NP-hardness of the problem, gave an inapproximability result, and provided an exponential-time exact algorithm. 
Motivated by the matching scheme used in  Hungary's higher education sector,
Bir\'{o} et al. \cite{BFIM10} considered a version of HR-LQ in which hospitals (i.e., colleges) 
are allowed to be closed, i.e., each hospital is assigned enough doctors or no doctor. 
They showed the NP-completeness to decide the existence of a stable matching.

The Classified Stable Matching problem (CSM), proposed by Huang \cite{CCH10}, is a generalization of HR-LQ without hospital closures.
In this model, each hospital (or institute, in Huang's terminology) has a classification of doctors 
(i.e., applicants) based on their expertise 
and gives an upper and lower quota for each class. 
Huang showed that it is NP-complete in general to decide the existence of a stable matching,
and proved that it is solvable in polynomial time 
if classes form a laminar family. 
For this tractable special case, Fleiner and Kamiyama \cite{FK16} 
gave a concise explanation in terms of matroids, and their framework is generalized by Yokoi \cite{Yokoi17}
to a framework with generalized matroids.

To cope with the nonexistence of a stable matching in constrained matching models
(not only models with lower quotas but also with other types of constraints such as regional constraints),
several relaxations of stability have been proposed.
See, e.g., Kamada and Kojima \cite{KK14, KK16}, Fragiadakis et al. \cite{Fragiadakis16}, and Goto et al. \cite{Goto16}.
Envy-freeness is one of them that places emphasis on fairness rather than efficiency.
Fragiadakis et al. \cite{Fragiadakis16} provided a strategy-proof algorithm that 
always finds an envy-free matching (or fair matching, in their terminology) of HR-LQ
under the assumption that all doctor-hospital pairs are acceptable.
The outcome of their mechanism also fulfills a second-best efficiency (i.e., nonwastefulness) property. 
Their framework is generalized in Goto et al. \cite{Goto16}
so that regional quotas can be handled.

Here we compare our models with the above models.
Unlike the models of Goto et al. \cite{Goto16} and Kamada and Kojima \cite{KK14, KK16}, our models cannot handle regional quotas.
Instead, our CSM model (in Sections \ref{sec:CSM} and \ref{sec:g-mat}) allows each hospital to have quotas on classes of doctors,
which are not dealt with in their models.
The setting of a tractable special case of CSM described in Section~\ref{sec:g-mat} 
is equivalent to a many-to-one case of Yokoi's model \cite{Yokoi17},
which studied stable matchings.
Neither \cite{Yokoi17} nor the study in this paper relies on the results of the other, while
both of them utilize the matroid framework of Fleiner \cite{Fleiner01, Fleiner03}.
%

\medskip
The remainder of this paper is organized as follows.
Section~\ref{sec:HR} investigates envy-free matchings in 
the Hospitals/Residents problem with lower quotas (HR-LQ).
In Section~\ref{sec:CSM}, 
we define an envy-free matching in the classified stable matching (CSM) model,
and show the NP-hardness of its existence test.
As its tractable special case, 
Section~\ref{sec:g-mat} presents results on CSM with paramodular quota functions.
Proofs for the theorems and corollary in Section~\ref{sec:g-mat}
are provided in the Appendix.

\section{Envy-freeness in HR with lower quotas}
\label{sec:HR}
In this section, we investigate envy-free matchings in 
the Hospitals/Residents problem with lower quotas (HR-LQ).

There are two disjoint sets $D$ and $H$, 
which represent doctors and hospitals, respectively.
A set of acceptable doctor-hospital pairs is denoted by $E\subseteq D\times H$.

\noindent
For each doctor $d\in D$, its acceptable hospital set is denoted by 
\[A(d):=\set{h\in H|(d,h)\in E}\subseteq H,\]  
and $d$ has a preference list (strict order) $\succ_{d}$ on $A(d)$.
Similarly, for each hospital $h\in H$, 
\[A(h):=\set{d\in D|(d,h)\in E}\subseteq D,\]  
and $h$ has a preference $\succ_{h}$ on $A(h)$.
Each hospital $h$ has
a lower quota $l_{h} \in \Z$ and an upper quota $u_{h}\in \Z$ with 
\[0\leq l_{h}\leq u_{h}\leq |A(h)|.\]

We call a tuple $I=(D, H, E, \succ_{DH}, \{(l_{h}, u_{h})\}_{h\in H})$ an {\bf HR-LQ instance},
where $\succ_{DH}$ is an abbreviated notation for the union of $\{\succ_{d}\}_{d\in D}$ and $\{\succ_{h}\}_{h\in H}$.
In particular, if $l_{h}=0$ for all $h\in H$, we call it an {\bf HR instance}.
An arbitrary subset  $M$ of $E$ is called an {\bf assignment}. 
For any assignment $M$, we denote
$M(d)=\set{h\in A(d)|(d,h)\in M}$ for each $d\in D$ and 
$M(h)=\set{d\in A(h)|(d,h)\in M}$ for each $h\in H$.
If $|M(d)|=1$, the notation $M(d)$ is also used to refer its single element. 

An assignment $M$ is called a {\bf matching} (or, said to be {\bf feasible}) if
$|M(d)|\leq 1$ for each $d\in D$ and $l_{h}\leq|M(h)|\leq u_{h}$ for each $h\in H$.
In a matching $M$, a doctor  $d$ is {\bf unassigned} (resp., {\bf assigned}) 
if $M(d)=\emptyset$ (resp., $|M(d)|=1$), and $h$ is {\bf undersubscribed} (resp., {\bf full})
if $|M(h)|<u_{h}$ (resp., $|M(h)|=u_{h}$).

\begin{definition}
	\label{def:stableHR}
	For a matching $M$, an unassigned pair $(d,h)\in E\setminus M$ is a {\bf blocking pair} if \\
	(i) $d$ is unassigned or $h\succ_{d} M(d)$, and
	(ii) $h$ is undersubscribed or there is $d'\in M(h)$ with $d\succ_{h} d'$.
	A matching $M$ is {\bf stable} if there is no blocking pair.
\end{definition}
For an HR instance, it is known that 
the algorithm of Gale and Shapley \cite{GS62}
always finds a stable matching.  The set of stable matchings has the following property.
\begin{proposition}[{\bf ``Rural Hospitals'' Theorem} \cite{GS85, Roth84b, Roth86}]
	\label{prop:ruralHR}
	For a given HR instance, 
	any two stable matchings $M, M'$ satisfy
	$|M(h)|=|M'(h)|$ for every $h\in H$. Moreover $M(h)=M'(h)$ if  $h$ is undersubscribed in $M$ or $M'$.
\end{proposition}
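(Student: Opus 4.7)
The plan is to reduce Proposition~\ref{prop:ruralHR} to the classical one-to-one stable marriage (SM) setting by cloning, using the same-set property of SM for the equicardinality claim and a short doctor-optimality argument for the ``moreover'' refinement. In the reduction I replace each hospital $h$ with upper quota $u_h$ by $u_h$ unit-capacity clones $h^{1},\ldots,h^{u_h}$, each carrying the preference $\succ_h$ on $A(h)$, and in every doctor's preference list substitute the token $h$ by the block $h^{1}\succ h^{2}\succ\cdots\succ h^{u_h}$. A routine check then establishes a bijection between stable HR matchings and stable SM matchings: an HR matching $M$ lifts by assigning $M(h)$ to $h^{1},\ldots,h^{|M(h)|}$ in decreasing order of $\succ_h$, and stability of any SM matching forces the matched clones of $h$ to form a $\succ_h$-ordered prefix, which aggregates back to an HR matching. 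Invoking the classical same-set property of SM---``an agent unmatched in some stable matching is unmatched in every stable matching''---the matched clones of each $h$ are identical across all stable SM matchings, which yields $|M(h)|=|M'(h)|$.

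For the ``moreover'' part, I would work directly in HR via the doctor-optimal stable matching $M_D$ produced by doctor-proposing Gale--Shapley. If $h$ is undersubscribed in $M$, the equicardinality just proven implies that $h$ is undersubscribed in every stable matching, in particular in $M_D$. The key observation is that in any stable matching $\tilde M$ in which $h$ is undersubscribed, every $d\in A(h)$ must be matched in $\tilde M$ and satisfy either $d\in \tilde M(h)$ or $\tilde M(d)\succ_d h$; otherwise $(d,h)$ would form a blocking pair. Now fix $d\in M_D(h)$ and any stable matching $M'$: if $d\notin M'(h)$, the observation applied to $M'$ gives $M'(d)\succ_d h = M_D(d)$, contradicting the doctor-optimality inequality $M_D(d)\succeq_d M'(d)$. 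Hence $M_D(h)\subseteq M'(h)$, and equicardinality upgrades this to $M_D(h)=M'(h)$. Taking $M'=M$ and $M'$ any other stable matching yields $M(h)=M'(h)$.

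The main technical point is verifying that the cloning bijection preserves stability in both directions, i.e., that no SM blocking pair ``crosses clone boundaries.'' This is ensured by the block-ordering of clones in each doctor's preferences together with the $\succ_h$-decreasing assignment of doctors to clones: an SM blocking pair $(d,h^i)$ forces $d$ to prefer $h$ to its current HR partner, and the blocking clone $h^i$ corresponds either to an undersubscribed seat at $h$ or to the clone holding the $\succ_h$-worst doctor of $M(h)$. Once the bijection is in place, everything reduces to standard SM machinery combined with the brief doctor-optimality argument above, so there is no genuinely new content beyond careful bookkeeping.
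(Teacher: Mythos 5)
The paper does not prove this proposition at all: it is stated as the classical ``Rural Hospitals'' Theorem with citations to Gale--Sotomayor and Roth, so there is no in-paper argument to compare against. Your proof is correct and is one of the standard routes to this result. The cloning reduction, the fact that matched clones of $h$ form a $\succ_h$-ordered prefix in any stable matching of the cloned instance, and the invocation of the same-set (``lone wolf'') property of stable marriage with incomplete lists together give the equicardinality claim; your refinement via the doctor-optimal stable matching $M_D$ correctly yields $M_D(h)\subseteq M'(h)$ for every stable $M'$ whenever $h$ is undersubscribed, and equicardinality upgrades this to equality. Two points worth flagging, neither of which is a gap: first, you lean on two further classical facts (the same-set property of SM and the doctor-optimality of doctor-proposing Gale--Shapley), both of which have independent standard proofs, so there is no circularity, but a fully self-contained write-up would have to include them; second, the ``moreover'' clause can alternatively be obtained by a direct symmetric-difference/alternating-path argument in the HR instance itself (very much in the spirit of the paper's own proof of Theorem~\ref{thm:HR1}), which avoids doctor-optimality entirely and may be preferable if you want a single uniform technique.
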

As mentioned in the Introduction, if hospitals have lower quotas,
then we cannot guarantee the existence of a stable matching anymore.
By Proposition~\ref{prop:ruralHR}, however, we can easily check the existence
by finding a stable matching while disregarding lower quotas, and 
checking whether the obtained matching meets lower quotas.

For an instance that has no stable matching, we want to obtain some matching that still has a kind of fairness.
As a relaxation of the concept of stability,
envy-freeness (also called fairness) of matchings has been proposed \cite{Fragiadakis16, WR16}.

\begin{definition}
	\label{def:envyfreeHR}
	For a matching $M$, a doctor $d$ has {\bf justified envy} toward $d'$ with $M(d')=h$ if 
	(i) $d$ is unassigned or $h\succ_{d} M(d)$ and 
	(ii) $d\succ_{h} d'$.
	A matching $M$ is {\bf envy-free} if no doctor has justified envy.
\end{definition}

Note that, if $d$ has justified envy toward $d'$ with $M(d)=h$,
then it means that $(d,h)$ is a blocking pair.
Thus, stability implies envy-freeness.
The envy-freeness of a matching is also regarded as the stability with reduced upper quotas, as follows.
\begin{proposition}\label{prop:HRtruncate}
	For $I=(D, H, E, \succ_{DH}, \{(l_{h}, u_{h})\}_{h\in H})$,
	an assignment $M$ is an envy-free matching 
	if and only if $M$ is a stable matching of $I'=(D, H, E, \succ_{DH}, \{(l_{h}, u'_{h})\}_{h\in H})$
	for some  $\{u'_{h}\}_{h\in H}$ with $u'_{h}\leq u_{h}~(h\in H)$.
\end{proposition}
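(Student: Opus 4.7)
The plan is to prove both implications by identifying envy-freeness with stability under suitably tightened upper quotas, with the key observation being that the only difference between blocking pairs and justified envy is the clause in condition (ii) that allows an undersubscribed hospital to be blocked. Reducing $u_h$ down to $|M(h)|$ eliminates precisely this clause.

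For the ``only if'' direction, given an envy-free matching $M$ of $I$, I would take $u'_{h}:=|M(h)|$ for every $h\in H$. Since $M$ is feasible in $I$, we have $l_{h}\leq |M(h)|\leq u_{h}$, which gives both $u'_{h}\leq u_{h}$ and the feasibility of $M$ in $I'$. Under these new quotas every hospital is full in $M$, so for any candidate blocking pair $(d,h)\in E\setminus M$ of $I'$, condition (ii) forces the existence of a doctor $d'\in M(h)$ with $d\succ_{h} d'$. Together with condition (i) this is exactly justified envy of $d$ toward $d'$, contradicting the envy-freeness of $M$. Hence $M$ is stable in $I'$.

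For the ``if'' direction, suppose $M$ is stable in $I'$ for some $\{u'_{h}\}$ with $u'_{h}\leq u_{h}$. Then $M$ is a matching of $I'$, so $l_{h}\leq |M(h)|\leq u'_{h}\leq u_{h}$ and $|M(d)|\leq 1$, meaning $M$ is also a matching of $I$. Assume for contradiction that some doctor $d$ has justified envy toward $d'$ with $M(d')=h$ in $M$. Then $d\neq d'$ and, since the envy condition (i) rules out $M(d)=h$, we have $d\notin M(h)$, so $(d,h)\in E\setminus M$. Conditions (i) and (ii) of justified envy then translate directly into the two conditions of a blocking pair of $I'$ (using the presence of $d'\in M(h)$ with $d\succ_{h}d'$ for condition (ii)), contradicting the stability of $M$ in $I'$. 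Thus $M$ is envy-free in $I$.

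The argument is essentially bookkeeping comparing Definitions~\ref{def:stableHR} and~\ref{def:envyfreeHR}; there is no real obstacle. The only conceptual content is the choice $u'_{h}=|M(h)|$ in the forward direction, which is the minimal reduction that saturates each hospital and thereby closes the ``undersubscribed'' escape route in the blocking-pair condition.
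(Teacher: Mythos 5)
Your proof is correct and follows essentially the same route as the paper's: the ``only if'' direction takes $u'_{h}=|M(h)|$ so that every hospital is full and the undersubscribed clause of the blocking-pair condition is disabled, and the ``if'' direction uses the general fact that stability (under any quotas at most the original ones) implies envy-freeness. Your version just spells out the bookkeeping between Definitions~\ref{def:stableHR} and~\ref{def:envyfreeHR} more explicitly than the paper does.
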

\begin{proof}
	The ``if'' part is clear because feasibility in $I'$ implies that in $I$, and stability implies envy-freeness. 
	For the ``only if'' part, suppose that $M$ is envy-free in $I$
	and set $u'_{h}:=|M(h)|$ for each $h\in H$.
	Then,  $M$ is feasible for $I'$ and all hospitals are full,
	and hence there is no doctor who claims a hospital's vacant seat.
	Because $M$ is envy-free, it is stable in $I'$.
\end{proof}
By Proposition~\ref{prop:HRtruncate}, 
to check whether we can obtain a stable matching by reducing upper quotas,
it suffices to check for the existence of an envy-free matching.

Under the assumption that all doctor-hospital pairs are acceptable
and the sum of lower quotas does not exceed the number of doctors, 
Fragiadakis et al. \cite{Fragiadakis16} provided a strategy-proof mechanism that always finds an envy-free matching.
As a corollary, we have the following.
\begin{proposition}\label{prop:natural}
	For an instance $I=(D, H, E, \succ_{DH}, \{(l_{h}, u_{h})\}_{h\in H})$
	such that $E=D\times H$ and $|D|\geq \sum_{h\in H} l_{h}$,
	there exists an envy-free matching.
\end{proposition}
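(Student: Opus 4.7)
The plan is to give a direct, self-contained proof using Proposition~\ref{prop:HRtruncate} together with Gale--Shapley, rather than invoking the strategy-proof mechanism of Fragiadakis et al.\ as a black box. First I would construct an auxiliary HR instance $\tilde I$ from $I$ by setting each upper quota equal to the original lower quota $l_h$ and dropping the lower quotas altogether; Gale--Shapley then yields a stable matching $M$ of $\tilde I$.

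The main step is to show that every hospital is full in $M$, i.e., $|M(h)|=l_h$ for all $h\in H$. Suppose toward contradiction that some hospital $h$ is undersubscribed in $M$. Because $E=D\times H$, every doctor is acceptable to $h$, so by stability no doctor can be unassigned: an unassigned $d$ together with the undersubscribed $h$ would form a blocking pair. Hence $\sum_{h'\in H}|M(h')|=|D|$. Combining with the per-hospital upper bound $|M(h')|\le l_{h'}$ in $\tilde I$ and the hypothesis $|D|\ge \sum_{h'\in H} l_{h'}$ gives
\[
\sum_{h'\in H} l_{h'} \;\le\; |D| \;=\; \sum_{h'\in H} |M(h')| \;\le\; \sum_{h'\in H} l_{h'},
\]
so equalities hold throughout, and combined with $|M(h')|\le l_{h'}$ for each $h'$ this forces $|M(h')|=l_{h'}$ for every $h'$, contradicting the undersubscription of $h$.

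Once every hospital is full, $M$ is feasible for the instance $I'$ obtained from $I$ by replacing each upper quota by $l_h$ (note $l_h\le u_h$, so this fits the form of Proposition~\ref{prop:HRtruncate}). Because $M$ is stable in $\tilde I$ and $I'$ has the same upper quotas as $\tilde I$, $M$ also has no blocking pair with respect to $I'$, so $M$ is a stable matching of $I'$. By Proposition~\ref{prop:HRtruncate}, $M$ is an envy-free matching of $I$. The only delicate step is the full-hospital argument in the middle paragraph, which is precisely where both hypotheses $E=D\times H$ and $|D|\ge \sum_h l_h$ are used.
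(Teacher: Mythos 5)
Your proof is correct, but it takes a genuinely different route from the paper: the paper does not prove Proposition~\ref{prop:natural} directly at all---it simply invokes the strategy-proof mechanism of Fragiadakis et al.~\cite{Fragiadakis16}, which always outputs an envy-free matching under the stated hypotheses. Your argument is elementary and self-contained: you run Gale--Shapley on the instance with upper quotas $l_h$ and no lower quotas, then use completeness of the preference lists together with $|D|\ge\sum_{h}l_h$ to show every hospital ends up full, after which Proposition~\ref{prop:HRtruncate} (equivalently, the ``if'' direction of Theorem~\ref{thm:HR1}) delivers envy-freeness. The key step is sound: if some hospital were undersubscribed, then since $E=D\times H$ any unassigned doctor would form a blocking pair with it, so all doctors are assigned and $\sum_{h'}|M(h')|=|D|\ge\sum_{h'}l_{h'}\ge\sum_{h'}|M(h')|$ forces $|M(h')|=l_{h'}$ everywhere, a contradiction. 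In effect you have shown that the characterization of Theorem~\ref{thm:HR1} is automatically satisfied when $E=D\times H$ and $|D|\ge\sum_h l_h$, so the proposition becomes a corollary of the paper's own main HR-LQ result---a connection the paper leaves implicit. What your route gives up relative to the citation is only the additional guarantees of the Fragiadakis et al.\ mechanism (strategy-proofness and nonwastefulness), which are not needed for the bare existence claim; what it buys is a short proof that stays entirely within the machinery already developed in Section~\ref{sec:HR}.
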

However, if not all pairs are acceptable, then even an envy-free matching may not exist.
Figure~1 shows an instance with $D=\{d_{1}, d_{2}\}$, $H=\{h_{1}, h_{2}\}$, $E=\{(d_{1}, h_{1}), (d_{2}, h_{1}), (d_{2}, h_{2})\}$,
$l_{h_{1}}=l_{h_{2}}=1$, and $u_{h_{1}}=u_{h_{2}}=2$.
For this instance, $M=\{(d_{1}, h_{1}), (d_{2}, h_{2})\}$
is the unique feasible matching,
but it is not envy-free because $d_{2}$ has justified envy toward $d_{1}$.
Hence, there is no envy-free matching. 

\begin{figure}[htb]
	\label{fig1}
	\begin{center}
		\includegraphics[width=0.6\hsize]{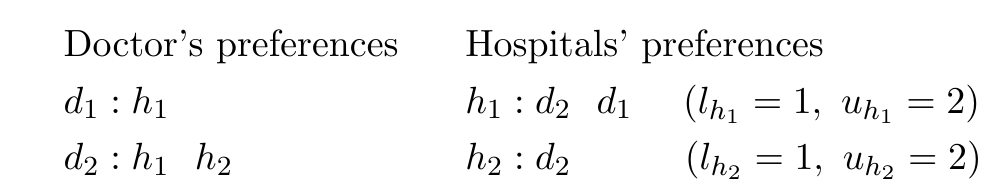}
		\caption{\small An instance of HR-LQ with no envy-free matching}
	\end{center}
\end{figure}

Note that an envy-free matching does exist if there is no lower quota, because empty matching is clearly envy-free.
Therefore, the existence test of an envy-free matching
is non-trivial when incomplete lists and lower quotas 
are introduced simultaneously.
Here we provide a characterization.

\begin{theorem}
	\label{thm:HR1}
	$I=(D, H, E, \succ_{DH}, \{(l_{h}, u_{h})\}_{h\in H})$
	has an envy-free matching if and only if 
	some stable matching $M'$ of the HR instance $I'=(D, H, E, \succ_{DH}, \{(0, l_{h})\}_{h\in H})$
	satisfies $|M'(h)|=l_{h}$ for all $h\in H$.
\end{theorem}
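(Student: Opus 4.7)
The $(\Leftarrow)$ direction is a direct check: if $M'$ is a stable matching of $I'$ with $|M'(h)|=l_h$ for all $h$, I verify that $M'$ is feasible in $I$ (clear, since $l_h\leq u_h$) and that any hypothetical justified envy in $M'$ viewed as a matching of $I$ would yield a blocking pair of $M'$ in $I'$, with condition~(ii) being satisfied by the very doctor $d'$ who is the envy target; this contradicts stability.

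For the substantive $(\Rightarrow)$ direction, I take an envy-free matching $M$ of $I$ (so $|M(h)|\geq l_h$ for every $h$) and any stable matching $M'$ of $I'$, and assume for contradiction that $|M'(h_0)|<l_{h_0}$ for some $h_0$. The plan is to build a finite ``chain'' of doctors $d_1,\ldots,d_p$ and hospitals $h_0,\ldots,h_{p-1}$ satisfying $M(d_k)=h_{k-1}$ for all $k$, $M'(d_k)=h_k$ for $k<p$, and $d_p$ unassigned in $M'$; then $(d_p,h_{p-1})$ will turn out to be a blocking pair of $M'$ in $I'$, the desired contradiction.

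To guarantee such a terminating chain exists, I invoke a flow/reachability argument. Consider the directed multigraph $G$ on $H\cup\{\bot\}$ with one arc $M(d)\to M'(d)$ per doctor $d$ (using $\bot$ in place of ``unassigned''). The inequality $|M(h)|\geq l_h\geq|M'(h)|$ makes the out-degree of every $h\in H$ at least its in-degree in $G$, strictly so at $h_0$. A short argument on the downward-closure $R$ of $h_0$ -- summing the degree surpluses over $R$ and comparing with the external in-flow -- then forces $\bot\in R$, so there is a directed path from $h_0$ to $\bot$ in $G$; the doctors indexing its arcs form the chain.

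The contradiction is then extracted by a straightforward induction along the chain, alternating the two hypotheses. At step~$k$, stability of $M'$ applied to $(d_k,h_{k-1})$ -- whose condition~(ii) is supplied at $k=1$ by the undersubscription of $h_0$ and for $k\geq 2$ by the previously derived $d_k\succ_{h_{k-1}}d_{k-1}\in M'(h_{k-1})$ -- forces condition~(i) to fail, giving $M'(d_k)=h_k\succ_{d_k}h_{k-1}=M(d_k)$. Envy-freeness of $M$ at $h_k$ (since $d_k$ now prefers $h_k$ to $M(d_k)$) then yields $d^*\succ_{h_k}d_k$ for every $d^*\in M(h_k)$, in particular $d_{k+1}\succ_{h_k}d_k$. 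At step~$p$, the accumulated $d_p\succ_{h_{p-1}}d_{p-1}\in M'(h_{p-1})$ together with $d_p$ being unassigned in $M'$ certifies $(d_p,h_{p-1})$ as a blocking pair of $M'$ in $I'$. I expect the reachability step to be the main technical hurdle: the alternating induction is clean, but ruling out cycles in $G$ and forcing termination at $\bot$ is precisely where the degree inequalities (with strict inequality at $h_0$) must be used with care.
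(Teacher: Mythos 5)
Your proposal is correct and follows essentially the same route as the paper: your reachability argument on the directed graph with arcs $M(d)\to M'(d)$ is a repackaging of the paper's degree-counting argument on the connected component of $M\triangle M'$ containing the deficient hospital, and it produces the same alternating path along which stability of $M'$ and envy-freeness of $M$ are chained. The only cosmetic difference is that you propagate preferences from the undersubscribed hospital toward the $M'$-unassigned doctor and extract the blocking pair at the doctor end, whereas the paper runs the induction in the opposite direction and exhibits the blocking pair at the undersubscribed hospital.
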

\begin{proof}
	For the ``if'' part, let $M'$ be a stable matching of $I'$ satisfying $|M'(h)|=l_{h}$ for all $h\in H$.
	Then, $M'$ is feasible for $I'$ and no doctor has justified envy because $M'$ has no blocking pair. 
	Thus, $M'$ is an envy-free matching of $I$.
	
	For the ``only if'' part, assume that $I$ has an envy-free matching $M$.
	Suppose, to the contrary, a stable matching $M'$ of $I'$ satisfies $|M'(h^{\ast})|<l_{h^{\ast}}$ 
	for some $h^{\ast}\in H$. 
	Let us denote $N=M\setminus M'$ and $N'=M'\setminus M$.
	For every $h\in H$, because  $|M'(h)|\leq l_{h}\leq |M(h)|$, we have $|N'(h)|\leq |N(h)|$. 
	In particular, $|N'(h^{\ast})|< |N(h^{\ast})|$ follows from $|M'(h^{\ast})|<l_{h^{\ast}}$.
	
	Consider a bipartite graph $G=(D, H; N\cup N')$,
	i.e., a graph between doctors and hospitals with edge set $N\cup N'=M\triangle M'$.
	Let $G^{\ast}$ be a connected component of $G$ including $h^{\ast}$, and denote
	by $D^{\ast}$ and $H^{\ast}$ the sets of doctors and hospitals in $G^{\ast}$, respectively. 
	Because there is no edge connecting $G^{\ast}$ and the outside,
	$\sum_{d\in D^{\ast}} |N(h)|=\sum_{h\in H^{\ast}} |N(h)|$ and 
	$\sum_{d\in D^{\ast}} |N'(h)|=\sum_{h\in H^{\ast}} |N'(h)|$.
	As $|N'(h^{\ast})|< |N(h^{\ast})|$ and $|N'(h)|\leq |N(h)|$ for any $h\in H^{\ast}$, we obtain 
	\[\textstyle{\sum_{d\in D^{\ast}} |N'(h)|=\sum_{h\in H^{\ast}} |N'(h)|<\sum_{h\in H^{\ast}} |N(h)|=\sum_{d\in D^{\ast}} |N(h)|}.\] 
	Then, there exists  $d^{\ast}\in D^{\ast}$ with $|N'(d^{\ast})|<|N(d^{\ast})|$,
	which implies $N'(d^{\ast})=\emptyset$ and $|N(d^{\ast})|=1$
	because $N'=M'\setminus M$ and $N=M\setminus M'$ are subsets of matchings.
	As $G^{\ast}$ is a connected bipartite graph,  
	there is a path $d_{0}~\!h_{0}~\!d_{1}~\!h_{1}\dots d_{k}~\!h_{k}$ with $d_{0}=d^{\ast}$ and $h_{k}=h^{\ast}$.
	Also, as $|N(d_{i})|\leq 1$ and $|N'(d_{i})|\leq 1$ for $i=0,1,\dots k$,
	this path alternately uses edges in $N=M\setminus M'$ and $N'=M'\setminus M$.
	Because $N'(d^{\ast})=\emptyset$ and $|N(d^{\ast})|=1$, we have 
	\begin{align*}
	M'(d_{0})&=\emptyset,\\
	(d_{i}, h_{i})&\in M\setminus M'\quad\quad(i=0,1,\dots,k),\\
	(d_{i+1},h_{i})&\in M'\setminus M\quad\quad(i=0,1,\dots,k-1).
	\end{align*}
	The doctor $d_{0}$ is unassigned in $M'$ and finds $h_{0}$ acceptable because $(d_{0}, h_{0})\in M$.
	Hence, the stability of $M'$  implies 
	that $h_{0}$ prefers $d_{1}\in M'(h_{0})$ to $d_{0}$.
	Then, the envy-freeness of $M$ implies that $d_{1}$ prefers 
	$h_{1}=M(d_{1})$ to $h_{0}$.
	In this way, we obtain 
	\begin{align*}
	&d_{i+1}\succ_{h_{i}} d_{i}\quad\quad(i=0,1,\dots,k-1),\\
	&h_{i+1}\succ_{d_{i+1}} h_{i}\quad\quad(i=0,1,\dots,k-1).
	\end{align*}
	Thus, $M(d_{k})=h_{k}\succ_{d_{k}} h_{k-1}=M'(d_{k})$.
	Because $h_{k}=h^{\ast}$ satisfies $|M'(h_{k})|<l_{h_{k}}$,
	then $(d_{k}, h_{k})$ is a blocking pair in $I'$,
	which contradicts the stability of $M'$.
\end{proof}
Theorem~\ref{thm:HR1} ensures that the following algorithm decides the existence of an envy-free matching of 
an HR-LQ instance $I=(D, H, E, \succ_{DH}, \{(l_{h}, u_{h})\}_{h\in H})$.

\vspace{3mm}
\noindent{\bf Algorithm} {\sf EF-HR-LQ}
\begin{description}
	\item[\sf Step1.] Find a stable matching $M'$ of $I'=(D, H, E, \succ_{DH}, \{(0, l_{h})\}_{h\in H})$. 
	\item[\sf Step2.] return $M'$ if $|M'(h)|=l_{h}$ for all $h\in H$, and otherwise ``there is no envy-free matching.''
\end{description}

Since the Gale-Shapley algorithm finds a stable matching
of an HR instance in $O(|E|)$ time, we obtain the following theorem.

\begin{theorem}\label{thm:HR2}
	For any HR-LQ instance $I=(D, H, E, \succ_{DH}, \{(l_{h}, u_{h})\}_{h\in H})$, 
	the algorithm {\sf EF-HR-LQ} 
	decides whether $I$ has an envy-free matching or not 
	in $O(|E|)$ time.
\end{theorem}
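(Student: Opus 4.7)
The plan is to derive the theorem as a direct consequence of Theorem~\ref{thm:HR1} (for correctness) combined with the known complexity of Gale--Shapley (for the running time bound), so the proof will essentially be a verification that the two ingredients compose cleanly.

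For correctness, I would argue both directions using Theorem~\ref{thm:HR1}. First, if the algorithm returns $M'$ in Step~2, then $M'$ is a stable matching of $I'=(D, H, E, \succ_{DH}, \{(0, l_{h})\}_{h\in H})$ with $|M'(h)|=l_{h}$ for every $h\in H$; the ``if'' direction of Theorem~\ref{thm:HR1} already showed that such an $M'$ is an envy-free matching of $I$, so the output is correct. Conversely, if the algorithm reports ``there is no envy-free matching,'' then the stable matching $M'$ of $I'$ produced in Step~1 violates $|M'(h)|=l_{h}$ for some $h\in H$; by the ``only if'' direction of Theorem~\ref{thm:HR1}, $I$ indeed has no envy-free matching. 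Here I would also note that the choice of stable matching returned in Step~1 does not matter, because the Rural Hospitals Theorem (Proposition~\ref{prop:ruralHR}) guarantees $|M'(h)|$ is the same across all stable matchings of $I'$.

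For the running time, the work to do is: (Step~1) run the Gale--Shapley algorithm on the HR instance $I'$, and (Step~2) compare $|M'(h)|$ against $l_h$ for each $h\in H$. The Gale--Shapley algorithm on an HR instance terminates in $O(|E|)$ time, using standard preference-list data structures; this was already invoked in the paragraph preceding the theorem. The Step~2 check iterates once through $H$ and can be bookkept during Step~1 in $O(|H|)$ time, and since we may assume every hospital appears in some acceptable pair (otherwise it is trivially empty, so the check against $l_h$ is immediate), $|H|=O(|E|)$. Thus total time is $O(|E|)$.

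The only mild subtlety, and the step I expect to require a moment of care, is the possibility that $|H|$ or $|D|$ is not dominated by $|E|$ (e.g., isolated vertices); I would handle this by pre-processing in linear time to discard doctors with $A(d)=\emptyset$ and hospitals with $A(h)=\emptyset$, noting that discarded hospitals must have $l_h=0$ for any envy-free matching to exist (this can be checked in $O(|H|)$ time and folded into the claimed bound, or absorbed into a standard input-size convention). Beyond that, there is no real obstacle: the theorem is a clean packaging of Theorem~\ref{thm:HR1} together with the classical Gale--Shapley complexity.
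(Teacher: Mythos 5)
Your proposal is correct and matches the paper's treatment exactly: the paper gives no separate proof of Theorem~\ref{thm:HR2}, deriving it immediately from Theorem~\ref{thm:HR1} together with the $O(|E|)$ bound for Gale--Shapley, and your observation that Proposition~\ref{prop:ruralHR} is needed to justify checking only the one stable matching found in Step~1 is precisely the point the paper flags when it says the characterization must be ``combined with the rural hospitals theorem.''
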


\section{Envy-freeness in Classified Stable Matching}
\label{sec:CSM}
In this section, we consider the envy-freeness in a model in which 
each hospital has lower and upper quotas on subsets of doctors.
This can be regarded as an envy-free matching version of the Classified Stable Matching, 
proposed by Huang \cite{CCH10}.
Similarly to Section~\ref{sec:HR}, we have
doctors $D$, hospitals $H$, acceptable pairs $E\subseteq D\times H$,
and preferences $\succ_{DH}$.

The only difference from HR-LQ is that, 
in the current model, each hospital $h\in H$ 
has a pair of functions $p_{h}, q_{h}:2^{A(h)}\to \Z$,
instead of a pair of numbers $l_{h}, u_{h}$.  
These functions define a lower and an upper quota for each subset of acceptable doctors.
Throughout this paper, we assume that for any hospital $h$, 
the functions $p_{h}$ and $q_{h}$ satisfy
\[0\leq p_{h}(B)\leq q_{h}(B)\leq |B|\qquad (B\subseteq A(h)).\] 
We call such a tuple $(D, H, E, \succ_{DH}, \{(p_{h}, q_{h})\}_{h\in H})$ a {\bf CSM instance}.
For each $h\in H$, the family of {\bf acceptable} subsets of doctors is denoted by
\[\F(p_{h}, q_{h}):=
\set{X\subseteq A(h)| \forall B\subseteq A(h):p_{h}(B)\leq |X\cap B|\leq q_{h}(B)}.\]
For any $h\in H$, we say that $B\subseteq A(h)$ has 
a {\bf non-trivial lower} (resp., {\bf upper}) {\bf constraint}
if $p_{h}(B)>0$ (resp., $q_{h}(B)<|B|$).
We denote the family of constrained subsets by 
\[\C(p_{h}, q_{h}):=\set{B\subseteq A(h)|p_{h}(B)>0 \text{~ or ~} q_{h}(B)<|B|}.\]
Then, we see that $\F(p_{h}, q_{h})$ is represented as
\[\F(p_{h}, q_{h})=\set{X\subseteq A(h)| \forall B\subseteq \C(p_{h}, q_{h}):p_{h}(B)\leq |X\cap B|\leq q_{h}(B)}.\]

For a CSM instance $I=(D, H, E, \succ_{DH}, \{(p_{h}, q_{h})\}_{h\in H})$, 
$M\subseteq E$ is called a {\bf matching} (or, said to be {\bf feasible}) if
$|M(d)|\leq 1$ for each $d\in D$ and $M(h)\in \F(p_{h}, q_{h})$ for each $h\in H$. 

\begin{definition}\label{def:stableCSM}
	For a matching $M$, an unassigned pair $(d,h)\in E\setminus M$ is a {\bf blocking pair} if 
	(i) $d$ is unassigned or $h\succ_{d} M(d)$, and
	(ii) $M(h)+d\in \F(p_{h}, q_{h})$ or 
	$M(h)+d-d'\in \F(p_{h}, q_{h})$ for some $d'\in M(h)$ \mbox{with $d\succ_{h} d'$}.
	A matching $M$ is {\bf stable} if there is no blocking pair.
\end{definition}
In Definition~\ref{def:stableCSM}, 
the condition $M(h)+d\in \F(p_{h}, q_{h})$ means that 
$h$ can add $d$ to the current assignment without violating any upper quota, 
and $M(h)+d-d'\in \F(p_{h}, q_{h})$ means that
$h$ can replace $d'$ with $d$ without violating any upper or lower quota.
The Classified Stable Matching, introduced by Huang \cite{CCH10},
is the problem to decide the existence of a stable matching for a given CSM instance%
\footnote{In his original model, each hospital $h$ has a classification $\C_{h}\subseteq 2^{A(h)}$ and 
	sets a lower and an upper quota for each member of $\C_{h}$. 
	That is, we are provided $\C(p_{h}, q_{h})$ and the values of $p_{h}$, $q_{h}$ on it, 
	rather than set functions $p_{h}$, $q_{h}$.
	Our formulation uses set functions to simplify the arguments in the next section.}%
.
Because this is a generalization of HR-LQ, there are instances that have no stable matching.
Let us consider envy-freeness for a CSM instance.
\begin{definition}
	For a matching $M$, a doctor $d$ has {\bf justified envy} toward $d'$ with $M(d')=h$ if 
	(i) $d$ is unassigned or $h\succ_{d} M(d)$ and 
	(ii) $M(h)+d-d'\in \F(p_{h}, q_{h})$ and $d\succ_{h} d'$.
	A matching $M$ is {\bf envy-free} if no doctor has justified envy.
\end{definition}

As in the case of HR-LQ,
an envy-free matching can be regarded as a stable matching 
with reduced upper quotas as follows.
For any $h\in H$ and $k\in \Z$ with $0\leq k\leq q(A(h))$,
a function $q'_{h}:2^{A(h)}\to \Z$ is called a {\bf $k$-truncation} of $q_{h}$ if 
$q'(A(h))=k$ and $q'(B)=q(B)$ for every $B\subsetneq A(h)$.
Also, we simply say that $q'_{h}$ is a {\bf truncation} of $q_{h}$ if 
there is such $k\in \Z$.

\begin{proposition}
	\label{prop:CSMtruncation}
	For  $I=(D, H, E, \succ_{DH}, \{(p_{h}, q_{h})\}_{h\in H})$,
	an assignment $M$ is an envy-free matching 
	if and only if $M$ is a stable matching of
	$I'=(D, H, E, \succ_{DH}, \{(p_{h}, q'_{h})\}_{h\in H})$
	such that each $q'_{h}$ is some truncation of $q_{h}$.
\end{proposition}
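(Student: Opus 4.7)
The plan is to mirror the proof of Proposition~\ref{prop:HRtruncate}, but with attention to the fact that in the CSM setting the upper quota is a set function rather than a number, so I must track the constraints family by family. The guiding observation is that envy-type swaps preserve cardinality, while blocking-by-addition strictly increases cardinality; truncating the upper bound on $A(h)$ to $|M(h)|$ will rule out the latter while leaving envy-type swap feasibility unchanged.

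For the ``if'' direction, suppose $M$ is stable in $I'$ where each $q'_h$ is a truncation of $q_h$. Since $q'_h(B)\leq q_h(B)$ for every $B\subseteq A(h)$, I have $\F(p_h,q'_h)\subseteq\F(p_h,q_h)$, so $M$ is feasible in $I$. I then argue the contrapositive: if some $d$ had justified envy in $I$ toward $d'\in M(h)$, so that $M(h)+d-d'\in\F(p_h,q_h)$ and $d\succ_h d'$, I would like to conclude $(d,h)$ blocks $M$ in $I'$. Since $|M(h)+d-d'|=|M(h)|\leq q'_h(A(h))$ and $q'_h(B)=q_h(B)$ for $B\subsetneq A(h)$, the set $M(h)+d-d'$ still lies in $\F(p_h,q'_h)$, giving the blocking pair in $I'$, contradicting stability.

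For the ``only if'' direction, assume $M$ is envy-free in $I$ and, for each $h\in H$, let $q'_h$ be the $|M(h)|$-truncation of $q_h$. Feasibility of $M$ for $I'$ is immediate: on every proper subset $B\subsetneq A(h)$ nothing changed, and on $A(h)$ itself $|M(h)|$ now meets the upper bound with equality. To check stability in $I'$, suppose toward a contradiction that $(d,h)\in E\setminus M$ is a blocking pair. If $M(h)+d\in\F(p_h,q'_h)$, then $|M(h)|+1\leq q'_h(A(h))=|M(h)|$, impossible. Otherwise, $M(h)+d-d'\in\F(p_h,q'_h)$ for some $d'\in M(h)$ with $d\succ_h d'$; since $\F(p_h,q'_h)\subseteq\F(p_h,q_h)$, the same swap witnesses that $d$ has justified envy toward $d'$ in $I$, contradicting envy-freeness.

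The arguments are routine once the correct truncation (namely, truncate $q_h$ to $|M(h)|$ on the top set) is chosen; the only mildly subtle point is justifying that the set $M(h)+d-d'$ appearing in an envy check is automatically feasible for the truncated function, which rests entirely on the cardinality identity $|M(h)+d-d'|=|M(h)|$ and the fact that truncation only alters the value at $A(h)$. I do not anticipate any real obstacle, but I will take care to state the inclusion $\F(p_h,q'_h)\subseteq\F(p_h,q_h)$ explicitly, since it is used in both directions.
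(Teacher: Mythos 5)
Your proof is correct and follows essentially the same route as the paper's: in both directions you use the $|M(h)|$-truncation and the key observation that the swap $M(h)+d-d'$ preserves cardinality, so its feasibility is unaffected by altering $q_h$ only at the top set $A(h)$. The extra care you take in spelling out the inclusion $\F(p_h,q'_h)\subseteq\F(p_h,q_h)$ and checking the constraint at $B=A(h)$ separately is sound and matches the paper's argument in substance.
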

\begin{proof}
	To show the ``only if'' part, let $M$ be an envy-free matching of $I$.
	For each $h\in H$, let $q'_{h}$ be the $|M(h)|$-truncation of $q_{h}$.
	Then $M(h)\in \F(p_{h}, q'_{h})$ and $M(h)+d\not\in \F(p_{h}, q'_{h})$
	for every $d\in A(h)\setminus M(h)$.
	That is, $M$ is feasible for $I'$ and there is no doctor who claims a hospital's vacant seat. 
	Therefore, if there is a blocking pair $(d,h)\in E\setminus M$ for $I'$,
	it follows that 
	$d$ has a justified envy toward some $d'$ with $M(d')=h$,
	which contradicts the envy-freeness of $M$.
	Thus, $M$ is a stable matching of $I'$.
	
	For the ``if'' part, let $M$ be a stable matching of $I'$.
	Clearly, $M$ is feasible for $I$.
	Suppose, to the contrary,
	some doctor $d$ has justified envy toward $d'$ with $M(d')=h$ with respect to $I$.
	Then $d$ is unassigned or $h\succ_{d}M(d)$. 
	Also, we have $d\succ_{h}d'$ and $M(h)+d-d'\in \F(p_{h}, q_{h})$.
	Then, $M(h)+d-d'\in \F(p_{h}, q'_{h})$ follows because $|M(h)+d-d'|=|M(h)|$. 
	Hence, $(d,h)$ is a blocking pair in $I'$, a contradiction.
\end{proof}

We provide a hardness result for deciding the existence of an envy-free matching.
Here, we assume that evaluation oracles of 
set functions $p_{h}$ and $q_{h}$ are available for each hospital $h$.
\begin{theorem}
	It is NP-hard to decide whether a CSM instance
	$I=(D, H, E, \succ_{DH}, \{(p_{h}, q_{h})\}_{h\in H})$
	has an envy-free matching or not.
	The problem is NP-complete even if the size of $\C(p_{h}, q_{h})$ is at most 4 for each $h\in H$.
\end{theorem}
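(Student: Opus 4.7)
The plan is to reduce from (3,B2)-SAT, the NP-complete restriction of 3-SAT in which each literal occurs in exactly two clauses~\cite{BKS03}. NP membership is routine: given a candidate matching $M$, feasibility is verified by querying $p_h$ and $q_h$ on the sets in $\C(p_h,q_h)$, and envy-freeness by checking, for each triple $(d,d',h)$ with $M(d')=h$ and $d\succ_h d'$, whether $M(h)-d'+d\in \F(p_h,q_h)$. Both are polynomial when $|\C(p_h,q_h)|$ is bounded.

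From a (3,B2)-SAT formula $\phi$ with variables $x_1,\ldots,x_n$ and clauses $C_1,\ldots,C_m$, I build a CSM instance with a \emph{variable gadget} for each $x_i$ and a \emph{clause gadget} for each $C_j$. The variable gadget has four \emph{literal doctors} $t_i^1,t_i^2,f_i^1,f_i^2$, one per occurrence of $x_i$, and a variable hospital $V_i$ acceptable to all four. The clause gadget has a clause hospital $W_j$ acceptable to the three literal doctors corresponding to the literals of $C_j$. Every literal doctor is acceptable to exactly two hospitals: its variable hospital $V_i$ and its unique clause hospital. For the clause hospital I take $p_{W_j}(A(W_j))=1$ as the only non-trivial constraint, so $|\C(p_{W_j},q_{W_j})|=1$ and $W_j$ requires at least one literal doctor.

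The heart of the construction is the variable gadget. I aim to choose $(p_{V_i},q_{V_i})$ with $|\C(p_{V_i},q_{V_i})|\le 4$ so that in any envy-free matching, $M(V_i)\in\{\{t_i^1,t_i^2\},\{f_i^1,f_i^2\}\}$, where the first outcome encodes $x_i=\mathrm{false}$ (the $f$-doctors are released to their clause hospitals) and the second encodes $x_i=\mathrm{true}$. Concretely, I plan to use the overall pair $p_{V_i}(A(V_i))=q_{V_i}(A(V_i))=2$ as one constraint set (carrying both a non-trivial lower and upper quota), together with three upper-quota constraints of the form $q_{V_i}(\{t_i^a,f_i^b\})\le 1$. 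Combined with a carefully chosen preference ordering at $V_i$ and doctors' preferences making $V_i$ the first choice of every literal doctor, the residual mixed $2$-subsets are to be excluded by justified envy (a $V_i$-preferred outsider can feasibly replace a $V_i$-dispreferred insider), while the two target pure configurations remain envy-free.

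Correctness then reduces to a short dictionary: from a satisfying assignment I build a matching by placing at $V_i$ the two literal doctors of the \emph{falsified} polarity and routing every remaining literal doctor to its clause hospital, so each $W_j$ receives a satisfied literal and envy-freeness follows by the gadget design; conversely, an envy-free matching yields a truth assignment via $M(V_i)$, and the lower quota at $W_j$ forces some released literal doctor into $W_j$, i.e.\ some literal of $C_j$ to be true. The main obstacle is the variable gadget itself: the budget $|\C(p_{V_i},q_{V_i})|\le 4$ is tight, since naively excluding each of the four mixed $2$-subsets by a separate upper-quota constraint together with the overall size constraint would already cost five constraint sets. The crux of the proof will be verifying that, under the chosen $V_i$-preference ordering and the three selected cross constraints, exactly the two pure configurations are simultaneously feasible and envy-free, while every mixed configuration admits a justified envy that survives both directions of the preference interaction between $V_i$ and the clause hospitals.
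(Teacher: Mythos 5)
Your reduction is the paper's reduction in every respect except the internal design of the variable gadget, and that is precisely where the argument stops short: you leave the preference order at $V_i$ unspecified and the central claim --- that the total-cardinality constraint plus three cross upper quotas, combined with envy-freeness, admit exactly the two pure configurations --- unverified. The gap is fillable: if the omitted pair is $\{t_i^2,f_i^2\}$ and you order $t_i^1\succ_{V_i} f_i^1\succ_{V_i} t_i^2\succ_{V_i} f_i^2$, then the one surviving mixed feasible set $\{t_i^2,f_i^2\}$ is killed by the justified envy of $t_i^1$ (who prefers $V_i$, outranks $f_i^2$, and can feasibly displace it to reach $\{t_i^1,t_i^2\}$), while in either pure configuration every swap by a $V_i$-preferred outsider lands on a forbidden mixed pair and is infeasible. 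But as submitted, the proposal defers exactly the step on which correctness hinges, so it is not yet a proof.

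More importantly, the obstacle you flag --- that forbidding all four mixed pairs by upper quotas together with the overall size constraint ``would already cost five constraint sets'' --- dissolves once you notice that a single constrained set may carry a non-trivial lower \emph{and} upper quota simultaneously. The paper's gadget sets $p_{V_i}(\{t,f\})=q_{V_i}(\{t,f\})=1$ on all four mixed pairs and imposes nothing else: these four equality constraints alone force $\F(p_{V_i},q_{V_i})=\bigl\{\{t_i^1,t_i^2\},\{f_i^1,f_i^2\}\bigr\}$ (the cardinality-two requirement is implied), so the budget $|\C(p_{V_i},q_{V_i})|\leq 4$ is met, feasibility alone pins down the variable hospital's assignment, and the envy analysis at variable hospitals becomes trivial --- any single swap between the two feasible sets is infeasible because they differ in all four doctors, so the hospitals' preference lists can be arbitrary. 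I would adopt that gadget: it eliminates the delicate interaction between the $V_i$-ordering and the clause hospitals that your version still has to manage, and it makes the ``only if'' direction a pure counting argument on the clause hospitals' lower quotas, exactly as in your intended dictionary.
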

\begin{proof}
We use reduction from the NP-complete problem (3, B2)-SAT \cite{BKS03}, which is a restriction of SAT
such that each clause contains
exactly three literals and each variable occurs exactly twice as a positive literal and exactly twice as a negative literal.
Let $\varphi=c_{1}\land c_{2}\land\cdots\land c_{m}$ 
be an instance of (3, B2)-SAT with 
Boolean variables $v_{1}, v_{2},\dots,v_{n}$.
Then, each clause $c_{j}$ is a disjunction of three literals, 
(e.g., $c_{j}=v_{1}\lor \lnot v_{2}\lor \lnot v_{3}$)
and each of literals $v_{i}$ and $\lnot v_{i}$ appears in exactly two clauses.
For each variable $v_{i}$, denote by $j^{\ast}(i,1)$, $j^{\ast}(i,2)$ 
the indices of two clauses that contain $v_{i}$.
Similarly, denote by $j^{\ast}(i,-1)$, $j^{\ast}(i,-2)$ 
the indices of clauses that contain $\lnot v_{i}$.

We now define a CSM instance corresponding to $\varphi$.
We have a variable-hospital $h_{i}$ for each variable $v_{i}$,
and a clause-hospital $h_{j}$ for each clause $c_{j}$.
For each variable $v_{i}$,
we have four doctors $\set{d_{i,t}|t\in \{1,2,-1,-2\}}$.
For each doctor $d_{i,t}$, we have
\[A(d_{i,t})=\{h_{i},h_{j^{\ast}(i,t)}\},\quad h_{i}\succ_{d_{i,t}} h_{j^{\ast}(i,t)}.\]
The set $E$ is defined as the set of all pairs $(d_{i,t},h)$ such that $h\in A(d_{i,t})$.
Then, for each variable-hospital $h_{i}$ and clause-hospital $h_{j}$, we have  
\begin{align*}
&A(h_{i})=\set{d_{i,t}|t\in \{1,2,-1,-2\}},\\
&A(h_{j})=\set{d_{i,t}|j^{\ast}(i,t)=j}.
\end{align*}
Note that $d_{i,t}\in A(h_{j})$ implies $v_{i}\in c_{j}$ or $\lnot v_{i}\in c_{j}$.
Also, each of $v_{i}\in c_{j}$ and $\lnot v_{i}\in c_{j}$ implies $d_{i,t}\in A(h_{j})$
for some unique $t\in\{1,2,-1,-2\}$.
Therefore,  $|A(h_{j})|=3$ for each clause-hospital $h_{j}$.
For each variable-hospital $h_{i}$, define $p_{h_{i}}$ and $q_{h_{i}}$ so that 
\begin{align*}
&\C(p_{h_{i}}, q_{h_{i}})=\textstyle{\bigcup}\set{\{d_{i,t}, d_{i,t'}\}| t\in \{1,2\},~t'\in \{-1,-2\}},\\
&p_{h_{i}}(\{d_{i,t}, d_{i,t'}\})=q_{h_{i}}(\{d_{i,t}, d_{i,t'}\})=1\qquad (t\in \{1,2\},~t'\in \{-1,-2\}).
\end{align*}
Then,  we see that
$\F(p_{h_{i}}, q_{h_{i}})=\{D^{+}_{i}, D^{-}_{i}\}$, where $D^{+}_{i}:=\{d_{i,1}, d_{i,2}\}$ and $D^{-}_{i}:=\{d_{i,-1}, d_{i,-2}\}$.
For each clause-hospital $h_{j}$, define $p_{h_{i}}$ and $q_{h_{i}}$ so that 
\[\C(p_{h_{j}}, q_{h_{j}})=\{A(h_{j})\},~~
p_{h_{j}}(A(h_{j}))=1,~~q_{h_{j}}(A(h_{j}))=|A(h_{j})|=3.\]
We define preference lists of hospitals arbitrarily.
Note that $|\C(p_{h}, q_{h})|\leq 4$ for every hospital.
We show that this CSM instance has an envy-free matching
if and only if  $\varphi=c_{1}\land c_{2}\land\cdots\land c_{m}$ is satisfiable.

\smallskip

{\bf The ``only if'' part:}
Suppose that there is an envy-free matching $M$.
Then, for every variable-hospital $h_{i}$, 
$M(h_{i})$ is $D^{+}_{i}$ or $D^{-}_{i}$.
For each $h_{i}$, set variable $v_{i}$ to {\sf FALSE} if $M(h_{i})=D^{+}_{i}$, 
and to {\sf TRUE} if $M(h_{i})=D^{-}_{i}$.
This Boolean assignment satisfies every clause $c_{j}$ of $\varphi$ as follows.
Because $M(h_{j})\in\F(p_{h_{j}}, q_{h_{j}})$,
we have $|M(h_{j})|\geq 1$. Hence,
some $d_{i,t}$ with $j^{\ast}(i,t)=j$ is assigned to $h_{j}$.
Then, $d_{i,t}\not\in M(h_{i})$.
There are two cases: (i) $t\in \{1,2\}$, (ii) $t\in \{-1,-2\}$.
In the case (i), $d_{i,t}\not\in M(h_{i})$ implies $M(h_{i})\neq D^{+}_{i}$, and hence 
$v_{i}$ is set to {\sf TRUE}. Also, $t\in \{1,2\}$ and $j^{\ast}(i,t)=j$ imply
$v_{i}\in c_{j}$. Hence, clause $c_{j}$ is satisfied.
Similarly, in the case (ii), we see that
$v_{i}$ is set to {\sf FALSE} and we have $\lnot v_{j}\in c_{j}$. 
Hence, clause $c_{j}$ is satisfied.
\smallskip

{\bf The ``if'' part:}
Suppose that there is a Boolean assignment satisfying $\varphi$.
Define an assignment $M$ so that
\begin{itemize}
	\setlength{\itemsep}{0mm}
	\item 
	$M(h_{i})=D^{-}_{i}$ if $v_{i}$ is {\sf TRUE}, and $M(h_{i})=D^{+}_{i}$ if $v_{i}$ is {\sf FALSE}, and
	\item  
$M(h_{j})=
\set{d_{i,t}\in A(h_{j})|d_{i,t}\in D^{+}_{i}, ~\text{$v_{i}$ is {\sf TRUE}}}\cup 
\set{d_{i,t}\in A(h_{j})|d_{i,t}\in D^{-}_{i}, ~\text{$v_{i}$ is {\sf FALSE}}}$.
\end{itemize}
We can observe that $|M(d)|=1$ for every doctor $d$,
and $M(h_{i})\in \F(p_{h_{i}},q_{h_{i}})$ for every variable-hospital $h_{i}$.
Also, because all clauses are satisfied,  
the above definition implies 
$M(h_{j})\in \F(p_{h_{j}},q_{h_{j}})$ for every clause-hospital $h_{j}$.
Then, $M$ is feasible. We now show the envy-freeness of $M$.
Suppose, to the contrary, $d_{i,t}$ has justified envy toward $d'$.
Because we have $|M(d_{i,t})|=1$,  $A(d_{i,t})=\{h_{i}, h_{j^{\ast}(i,t)}\}$,  
and $h_{i}\succ_{d_{i,t}} h_{j^{\ast}(i,t)}$,
this justified envy implies conditions $d'\in M(h_{i})$, $d_{i,t}\not\in M(h_{i})$ and 
$M(h_{i})+d_{i,t}-d'\in \F(p_{h_{i}}, q_{h_{i}})$.
As $M(h_{i})\in \F(p_{h_{i}}, q_{h_{i}})=\{D^{+}_{i}, D^{-}_{i}\}$,
then we have
$\{M(h_{i})+d_{i,t}-d', M(h_{i})\}=\{D^{+}_{i}, D^{-}_{i}\}$,
which contradicts 
$|D^{+}_{i}\setminus D^{-}_{i}|=|D^{-}_{i}\setminus D^{+}_{i}|=2$.
\end{proof}

\section{Envy-freeness in CSM with Paramodular Quotas} 
\label{sec:g-mat}
In Section~\ref{sec:CSM}, we showed that it is NP-hard in general to
decide whether a CSM instance has an envy-free matching or not.
This section shows that the problem is solvable in polynomial time 
if the pair of quota functions is paramodular for each hospital. 
The proofs of the theorems and corollary in this section are provided in the Appendix. 
We first introduce the notion of paramodularity \cite{Frankbook}.

Let $A$ be a finite set and let $p, q:2^{A}\to \Z$.
The pair $(p,q)$ is {\bf paramodular} (or, called a {\bf strong pair} \cite{FT88}) if
\begin{itemize}
	\setlength{\itemsep}{-0.2mm}
	\item $p$ is {\bf supermodular}, i.e., $p(B)+p(B')\leq p(B\cup B')+p(B\cap B')$ for every $B, B'\subseteq A$,
	\item $q$ is {\bf submodular}, i.e., $q(B)+q(B')\geq q(B\cup B')+q(B\cap B')$ for every $B, B'\subseteq A$, and 
	\item 
	the {\bf cross-inequality}
	$q(B)-p(B')\geq q(B\setminus B')-p(B'\setminus B)$ holds for every $B, B'\subseteq A$.
\end{itemize}

Here we provide examples of constraints that can be represented by 
paramodular pairs.
(See Yokoi \cite[Appendices A and B]{Yokoi17}.)

\begin{example}[Laminar Constraints]\label{ex:1}
Let 
$\laminar\subseteq 2^{A}$ be a laminar (or hierarchical) classification (i.e.,
any $X, Y\subseteq \laminar$ satisfy $X\subseteq Y$ or $X\supseteq Y$ or $X\cap Y\neq \emptyset$).
Let $\hat{p}, \hat{q}: \laminar\to \Z$ be functions that define a lower and an upper quota for each class.
Denote the acceptable set family by
$\J(\laminar, \hat{p}, \hat{q}):=\set{B\subseteq A| \forall X\in \laminar: \hat{p}(X)\leq |B\cap X|\leq \hat{q}(X)}$.
If $\J(\laminar, \hat{p}, \hat{q})$ is nonempty, then $\J(\laminar, \hat{p}, \hat{q})=\F(p,q)$
for some paramodular pair $(p,q)$.
\end{example}

\begin{example}[Staffing Constraints]\label{ex:2}
For a finite set 
$S$ (e.g., a set of sections of a hospital), let $\Gamma:S\to 2^{A}$ and $\hat{l}, \hat{u}:S\to \Z$ be functions such that 
$\Gamma(s)\subseteq A$ represents members acceptable to $s\in S$
and $\hat{l}(s), \hat{u}(s)\in \Z$ represent a lower and an upper quota of each $s\in S$.
Let $\J(S,\Gamma,\hat{l}, \hat{u})\subseteq 2^{A}$ be a family of subsets $B\subseteq A$ such that there exists 
a function $\pi:B\to S$ satisfying $\forall d\in B: d\in \Gamma(\pi(d))$ and
$\forall s\in S: \hat{l}(s)\leq |\set{d\in B|\pi(d)=s}|\leq \hat{u}(s)$.
If $\J(S,\Gamma,\hat{l}, \hat{u})$ is nonempty, then $\J(S,\Gamma,\hat{l},\hat{u})=\F(p,q)$ for some paramodular pair $(p,q)$.
\end{example}

For a set function $p:2^{A}\to \Z$, its {\bf complement} $\overline{p}:2^{A}\to \Z$ 
is defined by
\[\overline{p}(B)=p(A)-p(A\setminus B)\quad (B\subseteq A).\]

Recall that a CSM instance is represented as a tuple
$(D, H, E, \succ_{DH}, \{(p_{h}, q_{h})\}_{h\in H})$,
where it is assumed that $0\leq p_{h}(B)\leq q_{h}(B)\leq |B|$ for every $h\in H$ and $B\subseteq A(h)$.
Here is the main theorem of this section.
We denote by $\zero$ a set function that is identically zero.

\begin{theorem}\label{thm:g-mat1}
	For a CSM instance $I=(D, H, E, \succ_{DH}, \{(p_{h}, q_{h})\}_{h\in H})$,
	suppose that $(p_{h}, q_{h})$ is paramodular for each $h\in H$.
	Then, an instance $I'=(D, H, E, \succ_{DH}, \{(\zero,\overline{p_{h}})\}_{h\in H})$
	has at least one stable matching and the following three conditions are equivalent.
	\begin{enumerate}
		\setlength{\leftskip}{0.5cm}
		\item[\rm (a)] $I$ has an envy-free matching.
		\item[\rm (b)] Some stable matching $M'$ of $I'$ satisfies $|M'(h)|=p_{h}(A(h))$ for all $h\in H$.
		\item[\rm (c)] Every stable matching $M'$ of $I'$ satisfies $|M'(h)|=p_{h}(A(h))$ for all $h\in H$.
	\end{enumerate}
	Also, if {\rm (b)} holds, then $M'$ is an envy-free matching of $I$.
\end{theorem}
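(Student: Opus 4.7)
The plan is to reduce everything to a matroid version of Theorem~\ref{thm:HR1}: the auxiliary instance $I'$ replaces each hospital's paramodular constraint $(p_h,q_h)$ by $(\zero,\overline{p_h})$, and the bridge between $I$ and $I'$ is a size-preserving identification of their ``full-size'' feasible families.

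First I would verify that the pair $(\zero,\overline{p_h})$ is paramodular for each $h$, so that $\F(\zero,\overline{p_h})$ is exactly the family of independent sets of a matroid $\mathcal{M}_h$ on $A(h)$ whose rank function is $\overline{p_h}$ and whose bases have common size $\overline{p_h}(A(h))=p_h(A(h))$. Submodularity of $\overline{p_h}$ is the standard complement of supermodularity; monotonicity follows from supermodularity of $p_h$ applied to $A(h)\setminus B'$ and $B'\setminus B$ for $B\subseteq B'$; and the unit-increment matroid axiom $\overline{p_h}(X+d)-\overline{p_h}(X)\in\{0,1\}$ comes from submodularity together with $\overline{p_h}(\{d\})\leq p_h(\{d\})\leq 1$. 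Feeding $\{\mathcal{M}_h\}_{h\in H}$ into Fleiner's matroid-based stable matching framework, as adapted in \cite{Yokoi17}, yields a stable matching of $I'$ together with a matroid rural-hospitals theorem that $|M'(h)|$ is the same for every stable $M'$ of $I'$. This simultaneously gives the first assertion of the theorem and the equivalence (b)$\Leftrightarrow$(c).

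The bridge between $I$ and $I'$ is the identity that, for any $X\subseteq A(h)$ with $|X|=p_h(A(h))$,
\[
X\in\F(p_h,q_h)\quad\Longleftrightarrow\quad X\in\F(\zero,\overline{p_h}).
\]
Its $(\Leftarrow)$ direction uses the cross-inequality (specialized to $B'=A(h)$, this yields $q_h(B)\geq\overline{p_h}(B)$) together with the counting identity $|X\cap B|=|X|-|X\cap(A(h)\setminus B)|$, and $(\Rightarrow)$ uses the lower-quota condition of $\F(p_h,q_h)$ on $A(h)\setminus B$. Given this identity, direction (b)$\Rightarrow$(a) is short: if $M'$ is stable in $I'$ with $|M'(h)|=p_h(A(h))$ for every $h$, the identity gives $M'(h)\in\F(p_h,q_h)$, so $M'$ is feasible in $I$; and any justified envy of $d$ toward $d'\in M'(h)$ in $I$ would yield $M'(h)+d-d'\in\F(p_h,q_h)$ of size $p_h(A(h))$, hence in $\F(\zero,\overline{p_h})$ by the identity, which would make $(d,h)$ a blocking pair of $M'$ in $I'$. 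This simultaneously verifies the last sentence of the theorem.

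For (a)$\Rightarrow$(c) I would argue by contradiction following the pattern of Theorem~\ref{thm:HR1}. Assume $M$ is envy-free in $I$ and some stable $M'$ of $I'$ has $|M'(h^*)|<p_{h^*}(A(h^*))$. Feasibility of $M$ in $I$ forces $|M(h)|\geq p_h(A(h))$ while independence in $\mathcal{M}_h$ forces $|M'(h)|\leq p_h(A(h))$, so the same connected-component analysis in the bipartite graph on $M\triangle M'$ produces an alternating path $d_0h_0d_1h_1\dots d_kh_k$ with $d_0$ unassigned in $M'$ and $h_k=h^*$. The main obstacle will be that each preference step along the chain now demands a set-system exchange witness rather than a simple capacity check: non-blocking of $M'$ in $I'$ at $(d_i,h_i)$ yields $d_{i+1}\succ_{h_i}d_i$ only together with a witness $M'(h_i)+d_i-d_{i+1}\in\F(\zero,\overline{p_{h_i}})$; envy-freeness of $M$ in $I$ at $(d_{i+1},h_i)$ yields $h_{i+1}\succ_{d_{i+1}}h_i$ only together with a witness $M(h_i)+d_{i+1}-d_i\in\F(p_{h_i},q_{h_i})$; and the terminating step at $h^*$ requires $M'(h^*)+d_k\in\F(\zero,\overline{p_{h^*}})$. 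To secure all of these simultaneously I would invoke the matroid exchange axiom for $\mathcal{M}_h$ and the symmetric exchange for the generalized matroid $\F(p_h,q_h)$, exploiting the fact that $M(h)$ always contains a basis of $\mathcal{M}_h$ (since $M(h)\in\F(p_h,q_h)$ forces $p_h(A(h)\setminus M(h))=0$, giving $\overline{p_h}(M(h))=p_h(A(h))$), and refine the choice of path so that each exchange is realizable in its required set system.
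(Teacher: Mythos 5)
Your treatment of the existence of a stable matching of $I'$, of the equivalence (b)$\Leftrightarrow$(c), of the bridge identity ($|X|=p_h(A(h))$ implies $X\in\F(p_h,q_h)\Leftrightarrow X\in\F(\zero,\overline{p_h})$), and of (b)$\Rightarrow$(a) together with the last sentence all match the paper's proof (Claims \ref{claim:rankfunc} and \ref{claim:preparation1}, Lemma \ref{lem:2and3}, and the ``if'' part of Lemma \ref{lem:1and2}). One small slip there: the inequality $\overline{p_h}(\{d\})\leq p_h(\{d\})$ you invoke is backwards --- supermodularity of $p_h$ gives $\overline{p_h}(\{d\})\geq p_h(\{d\})$ --- and the correct route to $\overline{p_h}(B)\leq|B|$ is the cross-inequality with $B'=A(h)$, which yields $\overline{p_h}(B)\leq q_h(B)\leq|B|$; so the upper-quota function $q_h$ cannot be dispensed with even in this step.

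The genuine gap is in (a)$\Rightarrow$(c). You propose to transplant the alternating-path argument of Theorem \ref{thm:HR1} to the component of $M\triangle M'$ containing $h^*$, and you correctly identify that each link of the chain now needs an exchange witness: to extract $d_{i+1}\succ_{h_i}d_i$ from stability of $M'$ you must first exhibit $M'(h_i)+d_i-d_{i+1}\in\F(\zero,\overline{p_{h_i}})$, and to extract $h_{i+1}\succ_{d_{i+1}}h_i$ from envy-freeness of $M$ you must first exhibit $M(h_i)+d_{i+1}-d_i\in\F(p_{h_i},q_{h_i})$. Neither witness is supplied by the connected-component path: the matroid exchange axiom at $h_i$ only guarantees \emph{some} $d'\in M'(h_i)$ with $M'(h_i)+d_i-d'$ independent, and that $d'$ need not be the $d_{i+1}$ dictated by the graph, so the inference $d_{i+1}\succ_{h_i}d_i$ simply does not follow; symmetrically for the g-matroid exchange in $\F(p_{h_i},q_{h_i})$. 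Choosing the path dynamically via the exchange axioms instead then raises the problems of ensuring the walk does not cycle, that the two exchanges at each hospital can be realized by the \emph{same} pair $(d_i,d_{i+1})$, and that the walk still terminates at $h^*$ with the final witness $M'(h^*)+d_k\in\F(\zero,\overline{p_{h^*}})$; none of this is argued, and it is exactly the hard part. The paper avoids the issue entirely by a different mechanism: from the envy-free matching $M$ it builds the pair $N_D=M\cup\{(d,h)\mid M(d)\succ_d h\}$, $N_H=M\cup(E\setminus N_D)$, shows via Claim \ref{claim:preparation2} (which is where the symmetric exchange property is actually used, in the static form $C_h(N_H(h))\subseteq M(h)$) that $(N_D,N_H)$ dominates its image under Fleiner's monotone map $F_{I'}$, and iterates to a fixed point whose stable matching $M'$ satisfies $M(h)\subseteq N_H^k(h)$ and hence $|M'(h)|=p_h(A(h))$ by Claim \ref{claim:preparation1.5}. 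If you want to keep a path-based argument you would need to prove a lemma of roughly the strength of Claim \ref{claim:preparation2}; as written, the proposal asserts rather than establishes the feasibility of every exchange along the chain.
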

As will be shown in Appendix~\ref{sec:proof4}, the existence of a stable matching of $I'$ and
the equivalence between (b) and (c) follows from Fleiner's results on the matroid framework \cite{Fleiner01, Fleiner03}.
The most difficult part is showing the equivalence between conditions (a) and (b).
To show that (a) implies (b), we construct a stable matching $M'$ of $I'$ from an envy-free matching $M$ of $I$. 
This construction is achieved by using the fixed-point method of Fleiner \cite{Fleiner03}. 
The paramodularity of each $(p_{h}, q_{h})$
(or a generalized matroid structure of each $\F(p_{h}, q_{h})$) is essential
to show the existence of a fixed-point satisfying a required condition
(see Lemma~\ref{lem:1and2} in Appendix~\ref{sec:proof4} for the details).

By Theorem~\ref{thm:g-mat1}, when quota function pairs are paramodular,
we can decide the existence of an envy-free matching of $I=(D, H, E, \succ_{DH}, \{(p_{h}, q_{h})\}_{h\in H})$
by the following algorithm. 
\begin{description}
	\setlength{\itemsep}{-1mm}
	\item[\sf Step1.] Find a stable matching $M'$ of $I'=(D, H, E, \succ_{DH}, \{(\zero, \overline{p_{h}})\}_{h\in H})$. 
	\item[\sf Step2.] If $|M'(h)|=p_{h}(A(h))$ for every $h\in H$, then return $M'$. 
	Otherwise, return ``there is no envy-free matching.''
\end{description}

As will be shown in the Appendix, Step 1 (i.e., finding a stable matching of $I'$)
can be done efficiently by the generalized Gale-Shapley algorithm studied in \cite{Fleiner01, Fleiner03}. 
The detailed description of the algorithm is as follows.
Here, for each $h\in H$, $N\subseteq E$, and $d\in N(h)$, we use the notation 
$N(h)_{\succ_{h} d}:=\set{d'\in N(h)|d'\succ_{h} d}$
and $N(h)_{\succeq_{h} d}:=\set{d'\in N(h)|d'\succ_{h} d\text{ ~or~ } d'=d}$.

\begin{algorithm}[htb]
	\SetAlgoLined                 
	\caption{\sf EF-Paramodular-CSM}         
	\label{alg1}
	{\bf Input:}  $I=(D, H, E, \succ_{DH}, \{(p_{h}, q_{h})\}_{h\in H})$ such that each $(p_{h}, q_{h})$ is paramodular
	{\bf Output:} return an envy-free matching $M'$, or ``there is no envy-free matching.''\\
	\vspace{2mm}                          
	Set $N_{D}\leftarrow E$, $N_{H}\leftarrow \emptyset$, and let $M'$ be undefined\; 
	\While{$M'$ is undefined}{        
		$R_{D}\leftarrow \bigcup_{d\in D} \set{(d,h) |h\in N_{D}(d),~h\neq \max_{\succ_{d}}N_{D}(d)}$\;	
		$R_{H}\leftarrow \bigcup_{h\in H} 
		\set{(d,h)| d\in N_{H}(h),~ p(A(h)\setminus N_{H}(h)_{\succeq_{h} d})=p(A(h)\setminus N_{H}(h)_{\succ_{h} d})}$\;		
		\eIf{$(N_{D}, N_{H})=(E\setminus R_{H}, E\setminus R_{D})$}{
			let $M'\leftarrow N_{D}\cap N_{H}$ and {\bf break}\;
		}{
			update $(N_{D}, N_{H})\leftarrow (E\setminus R_{H}, E\setminus R_{D})$\;
	}}
	\eIf{$|M'(h)|=p_{h}(A(h))$ for all $h\in H$}{
		return $M'$\;
	}{	
		return ``there is no envy-free matching'';
	}
\end{algorithm}
\vspace{4mm}
In the Appendix, we show that the assignment $M'$ obtained in the algorithm is
indeed a stable matching of $I'$. 
Also, it will be shown that $N_{D}$ is monotone decreasing and 
$N_{H}$ is monotone increasing in the algorithm, and hence the ``while loop'' is iterated at most 
$2|E|$ times.
Thus, we obtain the following theorem.
(See Appendix~\ref{sec:proof5} for the details.)

\begin{theorem}\label{thm:g-mat2}
	For a CSM instance $I=(D, H, E, \succ_{DH}, \{(p_{h}, q_{h})\}_{h\in H})$
	such that each $(p_{h}, q_{h})$ is paramodular,
	the algorithm {\sf EF-Paramodular-CSM} decides whether $I$ has an envy-free matching or not 
	in $O(|E|^{2})$ time, 
	provided that evaluation oracles of $\{p_{h}\}_{h\in H}$ are available.
\end{theorem}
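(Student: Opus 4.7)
The plan is to derive Theorem~\ref{thm:g-mat2} from Theorem~\ref{thm:g-mat1} together with a correctness-plus-complexity analysis of Algorithm~\ref{alg1}. By Theorem~\ref{thm:g-mat1}, $I$ has an envy-free matching if and only if some stable matching $M'$ of $I'=(D,H,E,\succ_{DH},\{(\zero,\overline{p_h})\}_{h\in H})$ satisfies $|M'(h)|=p_h(A(h))$ for every $h\in H$, and such an $M'$ is itself envy-free in $I$. Hence it suffices to show that \textbf{(i)} the assignment $M'$ returned in the main loop of Algorithm~\ref{alg1} is a stable matching of $I'$, and \textbf{(ii)} the algorithm terminates in $O(|E|^2)$ time.

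For \textbf{(i)}, I would interpret Algorithm~\ref{alg1} as Fleiner's generalized Gale--Shapley fixed-point iteration \cite{Fleiner01,Fleiner03} for the matroid pair consisting of the partition matroid on the doctor side (each doctor takes at most one hospital) and, on each hospital's side, the generalized matroid $\F(\zero,\overline{p_h})$ induced by $\overline{p_h}$; the latter is a bona fide generalized matroid because paramodularity of $(p_h,q_h)$ makes $p_h$ supermodular, hence $\overline{p_h}$ submodular. Fleiner's matroid-kernel theorem then both yields the existence of a stable matching of $I'$ and identifies every fixed point $(N_D,N_H)$ of the iteration with one via $M'=N_D\cap N_H$. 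The key technical verification is that the set $R_H$ in the algorithm encodes the greedy rejection criterion for $\F(\zero,\overline{p_h})$ under $\succ_h$. Writing $Y:=A(h)\setminus N_H(h)_{\succeq_h d}$, the equality $p_h(Y)=p_h(Y\cup\{d\})$ in the definition of $R_H$ is equivalent, via the identity $\overline{p_h}(X)=p_h(A(h))-p_h(A(h)\setminus X)$, to $\overline{p_h}(N_H(h)_{\succeq_h d})=\overline{p_h}(N_H(h)_{\succ_h d})$, i.e., adding $d$ to the top segment $N_H(h)_{\succ_h d}$ fails to raise the matroid rank, which is exactly when $d$ is rejected by the greedy choice from $N_H(h)$. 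The analogous (and easier) interpretation of $R_D$ is immediate from the partition-matroid structure on the doctor side.

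For \textbf{(ii)}, I would prove by induction on iteration count that $N_D$ is monotonically nonincreasing and $N_H$ monotonically nondecreasing. Initially $N_D=E$ and $N_H=\emptyset$. If $N_D$ shrinks, then each $\max_{\succ_d}N_D(d)$ can only move down in $d$'s list, so $R_D$ grows and $N_H=E\setminus R_D$ grows; symmetrically, monotonicity of the matroid choice function (``once rejected, always rejected'') ensures that as $N_H$ grows $R_H$ grows, so $N_D=E\setminus R_H$ shrinks. Because each non-terminating iteration enlarges $R_D\cup R_H$ by at least one edge and $|R_D|+|R_H|\leq 2|E|$, the while-loop halts after at most $2|E|$ rounds. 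A single iteration costs $O(|E|)$ elementary operations plus $O(|E|)$ oracle calls to the $p_h$ (one pair of evaluations per $(d,h)\in N_H$), and the final check in Step~2 adds $O(|H|)$ oracle calls, giving the claimed $O(|E|^2)$ bound.

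The main obstacle I anticipate is the precise matroid-theoretic justification in \textbf{(i)}: one must check rigorously that $\F(\zero,\overline{p_h})$ is a generalized matroid whose greedy choice function under $\succ_h$ is captured by the explicit formula for $R_H$, and then invoke Fleiner's framework in a form that guarantees both existence of a kernel and correctness of $M'=N_D\cap N_H$. All of this hinges on supermodularity of $p_h$ and standard structural facts about paramodular pairs; once these are established, the monotonicity and timing arguments are routine bookkeeping that I would defer to Appendix~\ref{sec:proof5}.
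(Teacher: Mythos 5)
Your proposal follows essentially the same route as the paper's Appendix proof: reduce to Theorem~\ref{thm:g-mat1}, identify the while-loop with Fleiner's fixed-point iteration $F_{I'}$ for $I'=(D,H,E,\succ_{DH},\{(\zero,\overline{p_h})\}_{h\in H})$, verify via the complement identity that the displayed $R_H$ equals $N_H\setminus C_H(N_H)$ for the choice function induced from $(A(h),\overline{p_h},\succ_h)$, and then bound the number of iterations by $2|E|$ using monotonicity. The key computation you single out --- that $p_h(A(h)\setminus N_H(h)_{\succeq_h d})=p_h(A(h)\setminus N_H(h)_{\succ_h d})$ is the rank-does-not-increase rejection test for $\overline{p_h}$ --- is exactly the paper's central verification, and your appeal to Claim~\ref{claim:rankfunc}-type facts (supermodularity of $p_h$ gives a matroid rank function $\overline{p_h}$) is the right justification.

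One directional slip in your termination argument should be fixed. You write that when $N_D$ shrinks, ``$R_D$ grows and $N_H=E\setminus R_D$ grows,'' which is self-contradictory: since $R_D(N)=N\setminus C_D(N)$ and rejection sets are monotone nondecreasing in their argument (Proposition~\ref{prop:substitute}), shrinking $N_D$ makes $R_D$ \emph{shrink}, and it is precisely this shrinkage that makes $N_H=E\setminus R_D$ grow. Likewise, your potential function ``$R_D\cup R_H$ enlarges each round'' is not monotone ($R_D$ decreases while $R_H$ increases); the correct bookkeeping, as in Proposition~\ref{prop:algo}, is that $(N_D,N_H)$ strictly decreases in the lattice order $\geq$ (equivalently, $|N_D|+|E\setminus N_H|$ drops by at least one) until a fixed point is reached, and the longest chain in $2^E\times 2^E$ has length $2|E|$. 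With that repair the complexity analysis, and hence the whole proof, goes through as in the paper.
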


As is shown in Examples~\ref{ex:1} and \ref{ex:2}, 
when the family of acceptable doctor sets of each hospital $h\in H$
is defined by a laminar constraint $\J_{h}:=\J(\laminar_{h}, \hat{p}_{h}, \hat{q}_{h})$ or 
by a staffing constraint $\J_{h}:=\J(S_{h}, \Gamma_{h}, \hat{l}_{h}, \hat{u}_{h})$,
then there is a paramodular pair $(p_{h}, q_{h})$ such that $\J_{h}=\F(p_{h}, q_{h})$.
The following corollary states that, 
in such a case, we can decide the existence of 
an envy-free matching of $I=(D, H, E, \succ_{DH}, \{(p_{h}, q_{h})\}_{h\in H})$
even if evaluation oracles of  $\{p_{h}\}_{h\in H}$ are not provided.

\begin{corollary}\label{cor:g-mat4}
Suppose that, for each $h\in H$, the family of acceptable doctor sets
is defined in the form $\J_{h}:=\J(\laminar_{h}, \hat{p}_{h}, \hat{q}_{h})\neq \emptyset$ 
$($resp., $\J_{h}:=\J(S_{h}, \Gamma_{h}, \hat{l}_{h}, \hat{u}_{h})\neq \emptyset$$)$.
Let $(p_h, q_{h})$ be a paramodular pair such that $\J_{h}=\F(p_h, q_{h})$.
Then, given $\laminar_{h}, \hat{p}_{h}, \hat{q}_{h}$ $($resp., $S_{h}, \Gamma_{h}, \hat{l}_{h}, \hat{u}_{h}$$)$ 
for each $h\in H$, one can decide whether $I=(D, H, E, \succ_{DH}, \{(p_{h}, q_{h})\}_{h\in H})$ 
has an envy-free matching or not in time polynomial in $|E|$ $($resp., in $|E|$ and $\max_{h\in H}|S_{h}|$$)$.
\end{corollary}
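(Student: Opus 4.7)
The plan is to reduce to Theorem~\ref{thm:g-mat2}, which already supplies an $O(|E|^2)$-time decision procedure provided that an evaluation oracle for each $p_h$ is available. Inspecting Algorithm~\ref{alg1}, the oracle is invoked only in the computation of $R_{H}$ inside the while loop and in the final size check, so the total number of queries is $O(|E|^2)$. Hence it suffices to show that, in each of the two settings, a single query $p_h(B)$ can be answered in time polynomial in $|A(h)|$ (and additionally in $|S_h|$ for the staffing case) directly from the given description $(\laminar_h, \hat{p}_h, \hat{q}_h)$ or $(S_h, \Gamma_h, \hat{l}_h, \hat{u}_h)$. Since $(p_h, q_h)$ is the paramodular pair with $\F(p_h, q_h) = \J_h$, the standard correspondence between generalized matroids and paramodular pairs yields the closed form
\[
p_h(B) \;=\; \min\{\, |X \cap B| : X \in \J_h\, \},
\]
so the oracle reduces to a combinatorial optimization problem over $\J_h$.

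For the laminar case, I would compute $p_h(B)$ by bottom-up dynamic programming along the tree representation of $\laminar_h$. At each class $L \in \laminar_h$ and each feasible size $s$ with $\hat{p}_h(L) \leq s \leq \hat{q}_h(L)$, I would store the minimum of $|X \cap B \cap L|$ over members $X \in \J_h$ with $|X \cap L| = s$, obtaining the value at $L$ from those at its children by a convolution that respects the child size bounds and the cardinality constraint $s$. Since $|\laminar_h| = O(|A(h)|)$ and the tables have width $O(|A(h)|)$, the oracle runs in time polynomial in $|A(h)|$, and summing over $h$ keeps the total running time polynomial in $|E|$.

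For the staffing case, $p_h(B)$ is the minimum of $|X \cap B|$ over subsets $X \subseteq A(h)$ that admit a map $\pi : X \to S_h$ satisfying $d \in \Gamma_h(\pi(d))$ for all $d \in X$ and $\hat{l}_h(s) \leq |\pi^{-1}(s)| \leq \hat{u}_h(s)$ for all $s \in S_h$. I would model this as a min-cost flow problem on an auxiliary network with source and sink and $O(|A(h)| + |S_h|)$ internal nodes, assigning unit cost to arcs representing elements of $B$ and zero cost otherwise, and enforcing the lower bounds $\hat{l}_h(s)$ as arc lower bounds; the minimum cost of a feasible integral circulation equals $p_h(B)$ and can be computed in time polynomial in $|A(h)|$ and $|S_h|$.

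The main obstacle I anticipate is pinning down the closed form $p_h(B) = \min_{X \in \J_h}|X \cap B|$ for the paramodular pair representing $\J_h$, and verifying that the two subroutines above indeed compute it; both assertions follow from the general theory of generalized matroids underlying Examples~\ref{ex:1} and~\ref{ex:2}, but must be invoked explicitly. Once granted, chaining the oracle simulations with Theorem~\ref{thm:g-mat2} yields the claimed polynomial running times.
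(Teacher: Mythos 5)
Your proposal is correct and follows the same skeleton as the paper's proof: reduce to Theorem~\ref{thm:g-mat2}, and simulate the oracle for $p_h$ via the identity $p_h(B)=\min\{|X\cap B| : X\in\J_h\}$, which the paper justifies by Proposition~\ref{prop:one-to-one} (the pair $(p_h,q_h)$ is g-matroidal under the standing assumption $0\le p_h\le q_h\le|\cdot|$, hence uniquely determined by $\J_h$). The only divergence is in how the minimization is carried out: the paper treats both cases uniformly by viewing $p_h(B)$ as minimizing the $0/1$ weight $w_B$ over the generalized matroid $\J_h$ and citing the reduction to minimum cost circulation from Yokoi's Appendix~C (solved in strongly polynomial time via Tardos/Orlin), whereas you handle the staffing case by essentially the same flow construction but replace the laminar case with a direct bottom-up dynamic program over the tree representation of $\laminar_h$. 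Your DP is a sound and more self-contained alternative for that case (the table sizes are $O(|A(h)|)$ since a laminar family has $O(|A(h)|)$ classes); the paper's route buys uniformity and avoids having to verify the convolution details. Either way the oracle cost is polynomial per query and the conclusion follows.
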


\begin{proof}
Since we have Theorem~\ref{thm:g-mat2},
it completes the proof to show that we can simulate an evaluation oracle of each $p_{h}$ in time polynomial in $|E|$
(resp., in $|E|$ and $|S_{h}|$).
By Proposition~\ref{prop:one-to-one} in Appendix~\ref{sec:proof1}, 
for each $B\subseteq A(h)$, the value of $p_{h}(B)$ is obtained as
$p_{h}(B)=\min\{~|X\cap B|\mid X\in \J_{h}\}$.
Consider a weight function $w_{B}$ on $A(h)$ such that
$w_{B}(d)=1$ for every $d\in B$ and $w_{B}(d)=0$ for every $d\in A(h)\setminus B$.
Then, $p_{h}(B)$ is written as 
$p_{h}(B)=\min\set{w_{B}(X)\mid X\in \J_{h}}$,
which is a weight minimization problem on a generalized matroid.
As explained in \cite[Appendix C]{Yokoi17}, when $\J_{h}$ is given in the form above,
this can be reduced to the minimum cost circulation problem,
which can be solved in strongly polynomial time \cite{Tardos85-2, Orlin93}.
(See \cite{Yokoi17} for the details of the reduction.) 
Thus, the proof is completed.
\end{proof}

\begin{remark}
	Theorems~\ref{thm:g-mat1} and \ref{thm:g-mat2}
	generalize Theorems~\ref{thm:HR1} and \ref{thm:HR2} as follows.
	For a pair $(l_{h}, u_{h})$ of nonnegative integers
	with $0\leq u_{h}\leq l_{h}\leq |A(h)|$, define
	$p_{h}, q_{h}:2^{A(h)}\to \Z$ by
	\begin{align*}
	&p_{h}(B)=\max\{0, l_{h}-|A(h)\setminus B|\},\quad q_{h}(B)=\min\{u_{h}, |B|\},\qquad(B\subseteq A(h)).
	\end{align*}
	Then, $(p_{h}, q_{h})$ is paramodular and $\F(p_{h}, q_{h})=\set{X\subseteq A(h)|l_{h}\leq |X|\leq u_{h}}$.
	Hence, envy-freeness 
	for $(D, H, E, \succ_{DH}, \{l_{h}, u_{h}\}_{h\in H})$ 
	coincides with that for $(D, H, E, \succ_{DH}, \{p_{h}, q_{h}\}_{h\in H})$.
	Also, we can check  $p_{h}(A(h))= \max\{0, l_{h}-|A(h)\setminus A(h)|\}=l_{h}$.
\end{remark}

\section{Acknowledgments}
I wish to thank the anonymous reviewers whose
comments have benefited the paper greatly.
This work was supported by JST CREST, Grant Number JPMJCR14D2, Japan.


\appendix
\section{Proofs for Section \ref{sec:g-mat}}
\label{sec:proof}
Here, we provide proofs of
Theorems~\ref{thm:g-mat1}, \ref{thm:g-mat2}.
This section consists of five subsections, and the first three introduce notions and previous results needed for the proofs.
More precisely, Sections~\ref{sec:proof1}, \ref{sec:proof2}, and \ref{sec:proof3} respectively 
introduce notions of generalized matroids, choice functions induced from matroids, and the lattice fixed-point method for stable matchings.
Using them, the last two subsections provide the proofs of our results.

\subsection{Generalized Matroids}
\label{sec:proof1}
For a finite set $A$ and a family $\J\subseteq 2^{A}$, 
the pair $(A,\J)$ is called a {\bf generalized matroid} \cite{Tardos85} ({\bf \mbox{g-matroid}}, for short) if $\J$ is nonempty and satisfies 
the following property called {\bf simultaneous} (or {\bf symmetric}) {\bf exchange property}%
\footnote{This is not the original definition of generalized matroids by Tardos \cite{Tardos85}, but equivalent to
it as shown by Murota and Shioura \cite{MS99}.} \cite{MS99}.

\begin{description}
	\item[(B$^{\natural}$-EXC)]
	For any $X, Y\in \J$ and $e\in X\setminus Y$, we have \\
	(i) $X-e\in \J$, $Y+e\in \J$ or\\
	(ii) there exists some $e'\in Y\setminus X$ such that $X-e+e'\in \J$, $Y+e-e'\in \J$.
\end{description} 

\noindent
The family  $\J$ of a g-matroid $(A, \J)$ is also called an {\bf M$^{\natural}$-convex family} \cite{Mbook, Murota16survey}.
(There are various characterizations for g-matroids.
See, e.g., Tardos \cite{Tardos85}, Frank \cite{Frankbook} and Murota \cite{Mbook} for more information on g-matroid and its extensions.)

For set functions $p,q:2^{A}\to \Z$, the pair $(p, q)$ is called {\bf g-matroidal} if it is paramodular and
satisfies $0\leq p(B)\leq q(B)\leq |B|$ for every $B\subseteq A$.
As its name indicates, there is a one-to-one correspondence between
generalized matroids and g-matroidal pairs (see, e.g., \cite{Frankbook, FT88}).
\begin{proposition}
	\label{prop:one-to-one}
	A pair $(A,\J)$ is a g-matroid if and only if $\J=\F(p,q)$ 
	for some g-matroidal pair $(p,q)$.
	Such a g-matroidal pair is uniquely defined by 
\begin{align*}
p(B)=\min\{~|X\cap B|\mid X\in \J\}\qquad(B\subseteq A),\\
q(B)=\max\{~|X\cap B|\mid X\in \J\}\qquad (B\subseteq A).
\end{align*}
\end{proposition}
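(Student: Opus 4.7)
The plan is to establish the two directions of the equivalence together with the explicit formulas in one stroke, so that uniqueness follows for free. For the ``only if'' direction, starting from a g-matroid $(A,\J)$ I would define candidate functions by $p_{\J}(B):=\min\{\,|X\cap B|:X\in\J\,\}$ and $q_{\J}(B):=\max\{\,|X\cap B|:X\in\J\,\}$. The bounds $0\le p_{\J}(B)\le q_{\J}(B)\le|B|$ are immediate. Paramodularity would be verified by uncrossing: to show $q_{\J}$ is submodular, take $X_1,X_2\in\J$ attaining $q_{\J}(B)$ and $q_{\J}(B')$ respectively, and repeatedly apply (B$^{\natural}$-EXC) to elements $e\in(X_1\setminus X_2)\cap(B\triangle B')$; each swap preserves the sum $|X_1\cap B|+|X_2\cap B'|$ while strictly reducing $|X_1\triangle X_2|$, so the process terminates in a pair whose intersections with $B\cup B'$ and $B\cap B'$ witness the submodular inequality. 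Supermodularity of $p_{\J}$ is dual, and the cross-inequality is obtained by the same technique applied to a max-witness of $q_{\J}(B)$ and a min-witness of $p_{\J}(B')$. The identity $\J=\F(p_{\J},q_{\J})$ then holds by definition in one direction, and in the other by choosing for $X\in\F(p_{\J},q_{\J})$ a $Y\in\J$ minimizing $|X\triangle Y|$ and applying (B$^{\natural}$-EXC) on some $e\in Y\setminus X$ to produce $Y'\in\J$ strictly closer to $X$; the crucial observation is that the quota inequalities satisfied by $X$ guarantee that at least one of the two exchange options remains in $\J$.

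For the ``if'' direction, given a g-matroidal pair $(p,q)$ I would prove that $\F(p,q)$ is a nonempty M$^{\natural}$-convex family. Nonemptiness is the classical polyhedral statement that the base polytope of a paramodular pair contains an integer point, which follows from the discrete separation theorem for submodular flows. For the exchange axiom, I would take $X,Y\in\F(p,q)$ and $e\in X\setminus Y$ and study the minimal tight sets associated to the move: the family of $B\subseteq A$ with $|X\cap B|=q(B)$ and $e\in B$, and the family of $B'$ with $|Y\cap B'|=p(B')$ and $e\in B'$. Each family is closed under union by sub- and supermodularity, so each contains a unique maximal element; the cross-inequality applied to this pair forces either their intersection on $A\setminus\{e\}$ to be empty (yielding option (i) of (B$^{\natural}$-EXC) with $X-e,Y+e\in\F(p,q)$) or to contain some $e'\in Y\setminus X$ effecting option (ii). With both directions in hand, the explicit formulas and uniqueness are immediate: given any g-matroidal pair $(p,q)$ with $\J:=\F(p,q)$, the inequality $p_{\J}\ge p$ is by definition, and equality at every $B$ is witnessed by greedily constructing $X\in\J$ that includes as few elements of $B$ as possible, which is feasible because the restriction of $(p,q)$ to $B$ is again g-matroidal; the argument for $q$ is dual.

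The main obstacle will be verifying (B$^{\natural}$-EXC) on $\F(p,q)$ from paramodularity alone, as this is the step that genuinely uses the cross-inequality rather than just sub- and supermodularity. The subtlety is that the two maximal tight sets are defined asymmetrically (one for an upper-quota tight set in $X$, one for a lower-quota tight set in $Y$), and disentangling their interaction requires the full strength of $q(B)-p(B')\ge q(B\setminus B')-p(B'\setminus B)$ rather than either one-sided inequality. Everything else in the argument reduces to the definitions, a standard uncrossing of witnesses, or the classical nonemptiness of a g-polymatroid, so this tight-set analysis is the technical heart of the proposition.
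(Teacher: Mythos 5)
The paper does not actually prove this proposition: it is stated as a known fact with a pointer to Frank's book and Frank--Tardos, so you are reconstructing a textbook result rather than paralleling an argument in the text. Your overall architecture (define $p_{\J},q_{\J}$ by min/max, verify paramodularity, prove the converse via tight sets, deduce uniqueness) is the standard one, and the ``if'' direction and the uniqueness argument are sketched at a level consistent with the literature.

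There is, however, a genuine directional error in your uncrossing argument for paramodularity. To prove submodularity, $q_{\J}(B)+q_{\J}(B')\geq q_{\J}(B\cup B')+q_{\J}(B\cap B')$, you must bound the right-hand side from \emph{above}; but a witness $X$ only ever certifies a \emph{lower} bound $q_{\J}(S)\geq |X\cap S|$. Starting from maximizers $X_1,X_2$ of $q_{\J}(B)$ and $q_{\J}(B')$ and uncrossing them toward a pair witnessing $B\cup B'$ and $B\cap B'$ can only yield $q_{\J}(B\cup B')+q_{\J}(B\cap B')\geq q_{\J}(B)+q_{\J}(B')$, i.e.\ the reverse inequality. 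The correct argument runs the other way: take $Z$ attaining $q_{\J}(B\cup B')$ (or, in the matroid-style proof, a maximizer of $|{\cdot}\cap(B\cap B')|$ extended by exchanges to a maximizer of $|{\cdot}\cap(B\cup B')|$) and use the modular identity $|Z\cap B|+|Z\cap B'|=|Z\cap(B\cup B')|+|Z\cap(B\cap B')|$ on that single set, bounding the left-hand side by $q_{\J}(B)+q_{\J}(B')$. Your auxiliary claim that each swap ``preserves the sum $|X_1\cap B|+|X_2\cap B'|$'' also fails: an exchange of type (i) on $e\in B\setminus B'$ removes $e$ from $X_1$ and adds it to $X_2$, decreasing the sum by one. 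A smaller but related concern is your proof of $\F(p_{\J},q_{\J})\subseteq\J$: you invoke (B$^{\natural}$-EXC) on a pair $(Y,X)$ where $X$ is not yet known to lie in $\J$, which the axiom does not license; this inclusion is usually obtained from the integrality of the g-polymatroid $Q(p_{\J},q_{\J})$ rather than a direct exchange on $X$.
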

By Proposition~\ref{prop:one-to-one}, 
the families $\J(\laminar, \hat{p}, \hat{q})$ and $\J(S,\Gamma,\hat{l}, \hat{u})$ 
defined in Examples~\ref{ex:1} and \ref{ex:2} are the independent set families of g-matroids.
(See Yokoi \cite[Appendices A and B]{Yokoi17} for examples and operations of g-matroids.) 

A function $r:2^{A}\to \Z$ is called a {\bf matroid rank function} if 
it is submodular, monotone (i.e., $B\subseteq B'\subseteq A$ implies $r(B)\leq r(B')$), 
and satisfies $0\leq r(B)\leq |B|$ for every $B\subseteq A$.
The submodularity of $r$ is equivalent to the following 
{\bf diminishing returns property}:
for any $B'\subseteq B\subseteq A$ and $e\in A\setminus B$, we have
\[r(B+e)-r(B)\leq r(B'+e)-r(B').\]
In particular,  a matroid rank function satisfies 
$0\leq r(B+e)-r(B)\leq r(\{e\})-r(\emptyset)\leq 1$ 
for any $B\subseteq A$ and $e\in A\setminus B$.

A pair $(A,\I)$ is called a {\bf matroid}
if it is a g-matroid and $\emptyset \in \I$.
In terms of quota functions,
a pair $(A,\I)$ is a matroid if there is a matroid rank function $r$ such that $\I=\F(\zero,r)$.
Indeed, we can check that the pair $(\zero, r)$ is g-matroidal
for any matroid rank function $r$.

\subsection{Choice Functions Induced from Matroid Rank Functions}
\label{sec:proof2}
Let $r:2^{A}\to \Z$ be a matroid rank function on $A$
and $\succ$ be a linear order on $A$.
Let $\M=(A, r,\succ)$ and define a function $\Cm:2^{A}\to 2^{A}$ as follows.
Let $n=|A|$ and, for $i=1,2,\dots, n$, let $e_{i}$ be the $i$-th best element of $A$ with respect to $\succ$, i.e.
$e_{1}\succ e_{2}\succ \cdots \succ e_{n}$.
Let $A_{0}:=\emptyset$ and $A_{i}:=\{e_{1}, e_{2},\dots,e_{i}\}$ for each $i=1,2, \dots ,n$.
Then, define $\Cm$  by
\[\Cm(X):=\set{e_{i}\in X| r(A_{i}\cap X)>r(A_{i-1}\cap X)}\quad (X\subseteq A).\]
We call $\Cm$ the {\bf choice function induced from} $\M=(A, r,\succ)$.
Note that, for any $e_{i}\in A$ and $X\subseteq A$, 
the value of $r(A_{i}\cap X)-r(A_{i-1}\cap X)$ is $1$ or $0$ by the monotonicity and submodularity of $r$.
Also, $e_{i}\in A\setminus X$ implies $r(A_{i}\cap X)-r(A_{i-1}\cap X)=0$.
Then, for any $X\subseteq A$, we have
\begin{eqnarray}
\Cm(X)=\set{e_{i}\in A| r(A_{i}\cap X)-r(A_{i-1}\cap X)=1},\label{eq:choice1}\\
X\setminus \Cm(X)=\set{e_{i}\in X| r(A_{i}\cap X)-r(A_{i-1}\cap X)=0}.
\label{eq:choice2}
\end{eqnarray}
Such a choice function was introduced by Fleiner \cite{Fleiner01, Fleiner03} and used in 
several works \cite{BFIM10, FK16, Yokoi17}.
In these works,  matroids are usually given by independent set families rather than matroid rank functions. 
The following propositions (Propositions~\ref{prop:feasible}--\ref{prop:dominant}) are known facts, but
we provide alternative proofs in terms of matroid rank functions.
\begin{proposition}
	\label{prop:feasible}
	For any $X\subseteq A$, we have $\Cm(X)\in \F(\zero,r)$.
\end{proposition}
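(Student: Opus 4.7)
The plan is to show directly that for every $B\subseteq A$, the set $Y:=\Cm(X)$ satisfies $0\le |Y\cap B|\le r(B)$, which is exactly the membership condition for $\F(\zero,r)$. The lower bound is immediate, so the content of the proposition is the upper bound, and for this I want to bound $|Y\cap B|$ above by $r(X\cap B)$ and then use monotonicity of $r$.

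The key tool is the diminishing returns form of submodularity. First, I would rewrite $|Y\cap B|$ as
\[
|Y\cap B|=\sum_{i=1}^{n}\mathbf{1}[e_i\in Y\cap B]=\sum_{\substack{1\le i\le n\\ e_i\in X\cap B}}\bigl(r(A_i\cap X)-r(A_{i-1}\cap X)\bigr),
\]
using the defining formula \eqref{eq:choice1} of $\Cm$ together with the fact that each increment $r(A_i\cap X)-r(A_{i-1}\cap X)$ is $0$ or $1$ and equals $0$ whenever $e_i\notin X$.

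Next, for each index $i$ with $e_i\in X\cap B$, I would apply diminishing returns with the pair $A_{i-1}\cap X\cap B\subseteq A_{i-1}\cap X$ and the element $e_i$ (which lies outside the larger set) to obtain
\[
r(A_i\cap X)-r(A_{i-1}\cap X)\le r(A_i\cap X\cap B)-r(A_{i-1}\cap X\cap B).
\]
For indices $i$ with $e_i\notin X\cap B$, the right-hand side is still nonnegative by monotonicity. Summing the inequality over all $i=1,\dots,n$ and telescoping the right-hand side yields
\[
|Y\cap B|\le r(A_n\cap X\cap B)-r(\emptyset)=r(X\cap B)\le r(B),
\]
where the last inequality uses monotonicity of $r$. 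This is exactly what is required.

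I expect no serious obstacle: the whole argument is essentially one telescoping sum controlled by diminishing returns, and the main care point is choosing the correct pair of sets when invoking submodularity so that adding $e_i$ on both sides produces the intersections $A_i\cap X$ and $A_i\cap X\cap B$. Once that alignment is set up, the telescoping and the final appeal to monotonicity finish the proof, and the lower bound $|Y\cap B|\ge 0=\zero(B)$ is automatic.
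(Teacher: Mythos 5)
Your proof is correct and follows essentially the same route as the paper: both express $|\Cm(X)\cap B|$ via the increments $r(A_i\cap X)-r(A_{i-1}\cap X)$, dominate each by $r(A_i\cap X\cap B)-r(A_{i-1}\cap X\cap B)$ using diminishing returns, telescope to $r(X\cap B)$, and finish with monotonicity. No discrepancies to report.
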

\begin{proof}
	It suffices to show $|\Cm(X)\cap B|\leq r(B)$
	for any $B\subseteq A$.
	By \eqref{eq:choice1}, we have 
	$\Cm(X)\cap B=\set{e_{i}\in A| r(A_{i}\cap X)-r(A_{i-1}\cap X)=1,~e_{i}\in B}$.
	For any $e_{i}\in B$, since  $A_{i-1}\cap X\cap B\subseteq A_{i-1}\cap X$
	and $A_{i}\cap X\cap B=(A_{i-1}\cap X\cap B)+e_{i}$,
	the diminishing returns property of $r$ implies
	\[r(A_{i}\cap X)-r(A_{i-1}\cap X)\leq r(A_{i}\cap X\cap B)-r(A_{i-1}\cap X\cap B).\]
	Thus, $e_{i}\in B$, $r(A_{i}\cap X)-r(A_{i-1}\cap X)=1$ imply
	$r(A_{i}\cap X\cap B)-r(A_{i-1}\cap X\cap B)=1$.
	Therefore,
	\begin{align*}
	|\Cm(X)\cap B|
	&=|\set{e_{i}\in A| r(A_{i}\cap X)-r(A_{i-1}\cap X)=1,~e_{i}\in B}|\\
	&\leq |\set{e_{i}\in A| r(A_{i}\cap X\cap B)-r(A_{i-1}\cap X\cap B)=1}|\\
	&=\textstyle{\sum_{i:1\leq i\leq n}} [~r(A_{i}\cap X\cap B)-r(A_{i-1}\cap X\cap B)~]=r(X\cap B).
	\end{align*} 
	The monotonicity of $r$ implies $r(X\cap B)\leq r(B)$, and the proof is completed. 
\end{proof}
\begin{proposition}
	\label{prop:rank}
	For every $X\subseteq A$ and $j=1,2,\dots, n$, we have $|\Cm(X)\cap A_{j}|=r(A_{j}\cap X)$.
	In particular, $|\Cm(X)|=r(X)$.
\end{proposition}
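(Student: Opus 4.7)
The plan is to establish the identity by a one-line telescoping computation based on the representation~\eqref{eq:choice1}. First, I would note that
\[\Cm(X)\cap A_{j}=\set{e_{i}\in A_{j}| r(A_{i}\cap X)-r(A_{i-1}\cap X)=1},\]
so the cardinality of this set counts precisely the indices $i\in\{1,\dots,j\}$ for which the increment $r(A_{i}\cap X)-r(A_{i-1}\cap X)$ equals $1$. Since each such increment is either $0$ or $1$ (as was recalled just before Proposition~\ref{prop:feasible}, using monotonicity and diminishing returns of $r$ together with $A_{i}\cap X = (A_{i-1}\cap X)$ or $(A_{i-1}\cap X)+e_{i}$), counting ones is the same as summing the increments:
\[|\Cm(X)\cap A_{j}|=\sum_{i=1}^{j}\bigl[r(A_{i}\cap X)-r(A_{i-1}\cap X)\bigr].\]

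Next, I would telescope the sum to get $r(A_{j}\cap X)-r(A_{0}\cap X)=r(A_{j}\cap X)-r(\emptyset)$. Since $r$ is a matroid rank function, $0\leq r(\emptyset)\leq|\emptyset|=0$, and hence $r(\emptyset)=0$. This yields the desired identity $|\Cm(X)\cap A_{j}|=r(A_{j}\cap X)$.

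For the ``in particular'' clause, I would specialize to $j=n$. Then $A_{n}=A$, so $A_{n}\cap X=X$ gives $r(A_{n}\cap X)=r(X)$; moreover $\Cm(X)\subseteq X\subseteq A=A_{n}$, so $\Cm(X)\cap A_{n}=\Cm(X)$, and the identity specializes to $|\Cm(X)|=r(X)$. I do not anticipate any real obstacle: the entire proof consists of rewriting the characterization~\eqref{eq:choice1} as a telescoping sum, and the only nontrivial fact invoked (that increments of $r$ along the chain $A_{0}\subseteq A_{1}\subseteq\cdots$ are $0/1$-valued) is already established in the exposition preceding Proposition~\ref{prop:feasible}.
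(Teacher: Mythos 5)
Your proof is correct and follows essentially the same route as the paper's: rewrite $\Cm(X)\cap A_{j}$ via \eqref{eq:choice1}, observe the increments are $0/1$-valued so counting equals summing, and telescope to $r(A_{j}\cap X)-r(A_{0}\cap X)=r(A_{j}\cap X)$. The only difference is that you spell out the $j=n$ specialization and $r(\emptyset)=0$ explicitly, which the paper leaves implicit.
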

\begin{proof}
	By \eqref{eq:choice1}, 
	$\Cm(X)\cap A_{j}=\set{e_{i}\in A_{j}| r(A_{i}\cap X)-r(A_{i-1}\cap X)=1}$.
	This implies
	$|\Cm(X)\cap A_{i}|
	=\textstyle{\sum_{i:1\leq i\leq j}} [~r(A_{i}\cap X)-r(A_{i-1}\cap X)~]
	=r(A_{j}\cap X)-r(A_{0}\cap X)=r(A_{j}\cap X)$.
\end{proof}
\begin{proposition}
	\label{prop:substitute}
	$\Cm$ is {\bf substitutable}, i.e., 
	$X\subseteq Y\subseteq A$ implies $X\setminus \Cm(X)\subseteq Y\setminus \Cm(Y)$.
\end{proposition}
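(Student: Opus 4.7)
The plan is to use the equivalent characterization of $X\setminus \Cm(X)$ given by equation~\eqref{eq:choice2}: an element $e_i\in X$ lies in $X\setminus \Cm(X)$ precisely when the marginal increment $r(A_i\cap X)-r(A_{i-1}\cap X)$ equals zero. So I would fix an arbitrary $e_i\in X\setminus \Cm(X)$ and aim to show $e_i\in Y\setminus \Cm(Y)$, i.e., the same zero-marginal equation holds with $X$ replaced by $Y$. That $e_i\in Y$ is automatic from $e_i\in X\subseteq Y$.

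The core step is a direct application of the diminishing returns property. Since $e_i\notin A_{i-1}$ by the very definition of the indexing of elements, and since $A_{i-1}\cap X\subseteq A_{i-1}\cap Y$, I can apply diminishing returns with the smaller set $A_{i-1}\cap X$, the larger set $A_{i-1}\cap Y$, and the new element $e_i$. Because $e_i\in X\subseteq Y$, adding $e_i$ to the two prefix-intersections produces exactly $A_i\cap X$ and $A_i\cap Y$, so the inequality becomes $r(A_i\cap Y)-r(A_{i-1}\cap Y)\le r(A_i\cap X)-r(A_{i-1}\cap X)=0$. Monotonicity of $r$ then forces the left-hand increment to be exactly zero, which by \eqref{eq:choice2} gives $e_i\in Y\setminus \Cm(Y)$.

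There is no real obstacle here: the only subtlety is to verify that the hypothesis $e\notin B$ needed for diminishing returns is satisfied, which follows from $e_i\notin A_{i-1}\supseteq A_{i-1}\cap Y$. Everything else reduces to the monotonicity and submodularity of the matroid rank function, both of which are part of the definition.
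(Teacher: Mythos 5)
Your proof is correct and follows essentially the same route as the paper's: both reduce the claim to the characterization in \eqref{eq:choice2} and apply the diminishing returns property of $r$ to the prefix sets $A_{i-1}\cap X\subseteq A_{i-1}\cap Y$ with the new element $e_i$. You merely spell out the two details the paper leaves implicit (the hypothesis $e_i\notin A_{i-1}\cap Y$ and the use of monotonicity to pin the increment at exactly zero), which is fine.
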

\begin{proof}
	Suppose that $e_{i}\in X\setminus \Cm(X)$ for some $i$.
	This implies $r(A_{i}\cap X)-r(A_{i-1}\cap X)=0$ by \eqref{eq:choice2}.
	By the diminishing returns property and
	$X\subseteq Y$, the value of $r(A_{i}\cap Y)-r(A_{i-1}\cap Y)$ is also $0$,
	and hence $e_{i}\in Y\setminus \Cm(Y)$ by \eqref{eq:choice2}.
\end{proof}
\begin{proposition}
	\label{prop:size-monotone}
	$\Cm$ is {\bf size-monotone}, i.e., 
	$X\subseteq Y\subseteq A$ implies $|\Cm(X)|\leq |\Cm(Y)|$.
\end{proposition}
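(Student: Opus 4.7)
The plan is to derive this immediately from Proposition~\ref{prop:rank} combined with the monotonicity of the matroid rank function $r$. Proposition~\ref{prop:rank} already establishes that $|\Cm(X)| = r(X)$ for every $X \subseteq A$, so size-monotonicity of $\Cm$ reduces to monotonicity of $r$ on the chain $X \subseteq Y$.

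More concretely, I would proceed as follows. First, apply Proposition~\ref{prop:rank} to both $X$ and $Y$ to obtain $|\Cm(X)| = r(X)$ and $|\Cm(Y)| = r(Y)$. Second, invoke the monotonicity of $r$ (which is part of the definition of a matroid rank function stated in Section~\ref{sec:proof1}): since $X \subseteq Y \subseteq A$, we have $r(X) \leq r(Y)$. Chaining the two observations yields $|\Cm(X)| = r(X) \leq r(Y) = |\Cm(Y)|$, which is the desired inequality.

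There is essentially no obstacle here; the proposition is a one-line corollary of Proposition~\ref{prop:rank}. The only thing worth noting is that the key content of size-monotonicity has already been absorbed into Proposition~\ref{prop:rank}, whose own proof relies on the telescoping identity $\sum_{i=1}^{n}[r(A_i \cap X) - r(A_{i-1} \cap X)] = r(X)$ together with \eqref{eq:choice1}. So in a sense the work was done earlier, and here we simply harvest it.
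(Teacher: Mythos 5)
Your proof is correct and is exactly the paper's argument: the paper likewise deduces size-monotonicity immediately from Proposition~\ref{prop:rank} ($|\Cm(X)|=r(X)$) together with the monotonicity of the matroid rank function $r$. Nothing further is needed.
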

\begin{proof}
	This immediately follows from Proposition~\ref{prop:rank}
	and the monotonicity of $r$.
\end{proof}
\begin{proposition}
	\label{prop:dominant}
	For any $X\subseteq A$, the set $\Cm(X)$ {\bf dominates} every element in $X\setminus \Cm(X)$.
	That is, the following two hold.
	\begin{itemize}
		\item For every $e\in X\setminus \Cm(X)$, we have $\Cm(X)+e\not\in \F(\zero, r)$.
		\item For every $e\in X\setminus \Cm(X)$ and $e'\in \Cm(X)$, if $e\succ e'$, then $\Cm(X)+e-e'\not\in \F(\zero, r)$.
	\end{itemize}
\end{proposition}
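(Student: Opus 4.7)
The plan is to exhibit, for each element $e \in X \setminus \Cm(X)$, a single explicit ``tight'' set $B \subseteq A$ that serves as a witness for both bullet points simultaneously. Writing $e = e_j$ under the fixed enumeration $e_1 \succ e_2 \succ \cdots \succ e_n$, my candidate is $B := A_j \cap X$. The proof will rest almost entirely on Proposition~\ref{prop:rank} together with the fact that $\Cm(X) \subseteq X$.

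First, I verify that $B$ is tight for $\Cm(X)$ in $\F(\zero, r)$. Since $\Cm(X) \subseteq X$, we have $\Cm(X) \cap B = \Cm(X) \cap A_j$, and Proposition~\ref{prop:rank} gives $|\Cm(X) \cap A_j| = r(A_j \cap X) = r(B)$. Thus the constraint $|Y \cap B| \leq r(B)$ is already saturated by $Y = \Cm(X)$, leaving no room to add any element of $B$.

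For the first bullet, observe $e = e_j \in A_j \cap X = B$ and $e \notin \Cm(X)$, so $|(\Cm(X) + e) \cap B| = r(B) + 1 > r(B)$, which forces $\Cm(X) + e \notin \F(\zero, r)$. For the second bullet, take $e' = e_k \in \Cm(X)$ with $e \succ e'$, so $k > j$ and hence $e' \notin A_j$, giving $e' \notin B$. Then $|(\Cm(X) + e - e') \cap B| = |\Cm(X) \cap B| + 1 - 0 = r(B) + 1$, once more violating the rank inequality at $B$.

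I do not expect a serious obstacle here; the only subtlety is choosing the correct witness. A naive choice like $B = A_j$ fails because $r(A_j)$ can strictly exceed $r(A_j \cap X)$, so tightness would be lost; intersecting with $X$ (harmless because $\Cm(X) \subseteq X$) is exactly the bookkeeping needed to align Proposition~\ref{prop:rank} with the $\F(\zero, r)$ membership condition.
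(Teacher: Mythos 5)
Your proof is correct and is essentially identical to the paper's: the paper also uses the witness set $A_i\cap X$ (your $B=A_j\cap X$), derives tightness $|\Cm(X)\cap A_i|=r(A_i\cap X)$ from Proposition~\ref{prop:rank} together with $\Cm(X)\subseteq X$, and observes that $e'\notin A_i$ when $e\succ e'$ to handle the second bullet. No differences worth noting.
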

\begin{proof}
	Let $i$ be the index such that $e=e_{i}$, i.e., $e$ is the $i$-th best element for $\succ$.
	By Proposition~\ref{prop:rank}, we have $|\Cm(X)\cap A_{i}|=r(A_{i}\cap X)$. 
	With $\Cm(X)\subseteq X$ and $e_{i}\in X\setminus \Cm(X)$, this implies
	$|(\Cm(X)+e_{i})\cap (A_{i}\cap X)|=|\Cm(X)\cap A_{i}|+1>r(A_{i}\cap X)$, and hence
	$\Cm(X)+e_{i}\not\in \F(\zero, r)$.
	
	For the second claim, let $i'$ be the index such that $e'=e_{i'}$.
	Then, $e\succ e'$ implies $i<i'$, and hence $e_{i'}\not\in A_{i}$.
	This yields $|(\Cm(X)+e_{i}-e_{i'})\cap (A_{i}\cap X)|=|\Cm(X)\cap A_{i}|+1>r(A_{i}\cap X)$,
	which implies $\Cm(X)+e_{i}-e_{i'}\not\in \F(\zero, r)$.
\end{proof}

\subsection{Fixed-point Method for Stable Matchings on Matroids}
\label{sec:proof3}
Here we introduce the lattice fixed-point framework for stable matchings on matroids, 
studied by Fleiner \cite{Fleiner01, Fleiner03} (see also Hatfield and Milgrom \cite{HM05}).

Let $I=(D, H, E, \succ_{DH}, \{\zero, r_{h}\}_{h\in H})$ be a CSM instance such that
$r_{h}$ is a matroid rank function for each $h\in H$.
That is, each hospital has a matroidal upper quota function and no lower quota.

From $(D, E, \{\succ_{d}\}_{d\in D})$, we define doctors' joint choice function $C_{D}:2^{E}\to 2^{E}$. 
For any set $N\subseteq E$,
let $C_{D}(N)$ be the set of each doctor's best choices among $N$, i.e.,
\[C_{D}(N):=\textstyle{\bigcup_{d\in D}\set{(d,h)| h\in N(d),~h=\max_{\succ_{d}}N(d)}\quad(N\subseteq E)}.\]
From $(H, E, \{\succ_{h}\}_{h\in H}, \{r_{h}\}_{h\in H})$, we define hospitals' joint choice function
$C_{H}:2^{E}\to 2^{E}$.
First, for each hospital $h\in H$, let 
$C_{h}:2^{A(h)}\to 2^{A(h)}$ be a choice function induced from $(A(h), r_{h}, \succ_{h})$ as in Section~\ref{sec:proof2}.
Then, define  $C_{H}$ by
\[C_{H}(N):=\textstyle{\bigcup_{h\in H}\set{(d,h)| d\in N(h),~d\in C_{h}(N(h))}\quad(N\subseteq E)}.\]
Define rejection functions $R_{D}, R_{H}:2^{E}\to 2^{E}$ by
\[R_{D}(N)=N\setminus C_{D}(N),\quad R_{H}(N)=N\setminus C_{H}(N)\quad (N\subseteq E),\]
and a function $F_{I}:2^{E}\times 2^{E}\to 2^{E}\times 2^{E}$ by 
\[F_{I}(N_{D}, N_{H})=(E\setminus R_{H}(N_{H}),~ E\setminus R_{D}(N_{D}))\quad (N_{D}, N_{H}\subseteq E).\]
\begin{proposition}[Fleiner \cite{Fleiner01, Fleiner03}]
	\label{prop:fixedpoint}
	For $I=(D, H, E, \succ_{DH}, \{\zero, r_{h}\}_{h\in H})$ such that each $r_{h}$
	is a matroid rank function, if $(N_{D}, N_{H})$ is a fixed-point of $F_{I}$, then 
	$N_{D}\cap N_{H}=C_{D}(N_{D})=C_{H}(N_{H})$ holds and $N_{D}\cap N_{H}$ is a stable matching of $I$.
\end{proposition}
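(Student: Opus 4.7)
The plan is to first unpack the fixed-point equation to recover the claimed chain of identities $N_{D}\cap N_{H}=C_{D}(N_{D})=C_{H}(N_{H})$, and then to check feasibility and stability of $M:=N_{D}\cap N_{H}$ using the properties of matroid-induced choice functions collected in Section~\ref{sec:proof2}. Concretely, I would rewrite the fixed-point condition using $R_{H}(N_{H})=N_{H}\setminus C_{H}(N_{H})$ with $C_{H}(N_{H})\subseteq N_{H}$, which gives $E\setminus R_{H}(N_{H})=(E\setminus N_{H})\cup C_{H}(N_{H})$, and symmetrically on the doctor side. So $F_{I}(N_{D},N_{H})=(N_{D},N_{H})$ is equivalent to
\[
N_{D}=(E\setminus N_{H})\cup C_{H}(N_{H}),\qquad N_{H}=(E\setminus N_{D})\cup C_{D}(N_{D}).
\]
Intersecting the first identity with $N_{H}$ yields $N_{D}\cap N_{H}=C_{H}(N_{H})$, and symmetrically $N_{D}\cap N_{H}=C_{D}(N_{D})$; as a useful byproduct, $N_{D}\cup N_{H}=E$.

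For feasibility of $M$: the identity $M=C_{D}(N_{D})$ makes $M(d)$ either empty or the $\succ_{d}$-maximum of $N_{D}(d)$, so $|M(d)|\leq 1$; and $M=C_{H}(N_{H})$ together with Proposition~\ref{prop:feasible} gives $M(h)=C_{h}(N_{H}(h))\in \F(\zero,r_{h})$ for every $h\in H$. So $M$ is a matching of $I$.

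For stability, I would argue by contradiction. Suppose $(d,h)\in E\setminus M$ blocks $M$. Since $N_{D}\cup N_{H}=E$ and $(d,h)\notin N_{D}\cap N_{H}$, the pair lies in $N_{D}\setminus N_{H}$ or in $N_{H}\setminus N_{D}$. In the former subcase, $(d,h)\in R_{D}(N_{D})$, so $h$ is not $d$'s $\succ_{d}$-best option in $N_{D}(d)$, whence $M(d)$ is defined and strictly $\succ_{d}$-preferred to $h$, contradicting blocking condition (i) of Definition~\ref{def:stableCSM}. In the latter subcase, $(d,h)\in R_{H}(N_{H})$, so $d\in N_{H}(h)\setminus C_{h}(N_{H}(h))$; I would then apply Proposition~\ref{prop:dominant} to $X=N_{H}(h)$ with $\M=(A(h),r_{h},\succ_{h})$ to obtain simultaneously $M(h)+d\notin \F(\zero,r_{h})$ and $M(h)+d-d'\notin \F(\zero,r_{h})$ for every $d'\in M(h)$ with $d\succ_{h} d'$, which jointly contradict blocking condition (ii) of Definition~\ref{def:stableCSM}.

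The main obstacle is exactly this last subcase: Definition~\ref{def:stableCSM}(ii) offers two alternative ways for $(d,h)$ to block on the hospital side, and both must be refuted at once. Proposition~\ref{prop:dominant} is tailored for this purpose---it delivers two matching conclusions precisely aligned with those two alternatives---so the technical crux is not a fresh calculation but identifying that dominance of $C_{h}(N_{H}(h))$ inside $N_{H}(h)$ is the right tool. Once that identification is made, the remainder is bookkeeping with the fixed-point identities and Propositions~\ref{prop:feasible}--\ref{prop:dominant}.
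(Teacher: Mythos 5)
Your proof is correct. Note that the paper itself gives no proof of Proposition~\ref{prop:fixedpoint} --- it is quoted from Fleiner \cite{Fleiner01, Fleiner03} --- so there is nothing to diverge from; your argument is the standard one, and it correctly exploits exactly the machinery the paper sets up for this purpose: the rewriting $E\setminus R_{H}(N_{H})=(E\setminus N_{H})\cup C_{H}(N_{H})$ gives the identities $N_{D}\cap N_{H}=C_{D}(N_{D})=C_{H}(N_{H})$ and $N_{D}\cup N_{H}=E$, Proposition~\ref{prop:feasible} gives feasibility, and Proposition~\ref{prop:dominant} refutes both alternatives of Definition~\ref{def:stableCSM}(ii) at once for a pair in $N_{H}\setminus N_{D}=R_{H}(N_{H})$, while a pair in $N_{D}\setminus N_{H}=R_{D}(N_{D})$ fails condition (i).
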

Let $\geq$ be a partial order defined on $2^{E}\times 2^{E}$ as
\[(N_{D}, N_{H})\geq (N'_{D}, N'_{H})\iff [N_{D}\supseteq N'_{D}, ~N_{H}\subseteq N'_{H}]\]
Recall that $C_{h}$ is substitutable for each $h\in H$, 
This implies the following property of $F_{I}$.
\begin{proposition}[Fleiner \cite{Fleiner01, Fleiner03}]
	\label{prop:monotone}
	For $I=(D, H, E, \succ_{DH}, \{\zero, r_{h}\}_{h\in H})$ such that each $r_{h}$
	is a matroid rank function, the function $F_{I}$ is monotone with respect to $\geq$.
	That is, $(N_{D}, N_{H})\geq (N'_{D}, N'_{H})$ implies $F_{I}(N_{D}, N_{H})\geq F_{I}(N'_{D}, N'_{H})$. 
\end{proposition}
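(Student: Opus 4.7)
The plan is to unpack the definition of $F_I$ and the partial order $\geq$, and reduce the monotonicity of $F_I$ to the substitutability of the rejection functions $R_D$ and $R_H$. Recall that $F_I(N_D, N_H) = (E\setminus R_H(N_H),\, E\setminus R_D(N_D))$. The pair inequality $(N_D, N_H)\geq (N'_D, N'_H)$ reads $N_D\supseteq N'_D$ and $N_H\subseteq N'_H$, while the conclusion $F_I(N_D, N_H)\geq F_I(N'_D, N'_H)$ reads $E\setminus R_H(N_H)\supseteq E\setminus R_H(N'_H)$ and $E\setminus R_D(N_D)\subseteq E\setminus R_D(N'_D)$. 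Taking complements, the two required inclusions become $R_H(N_H)\subseteq R_H(N'_H)$ when $N_H\subseteq N'_H$, and $R_D(N_D)\supseteq R_D(N'_D)$ when $N_D\supseteq N'_D$. Hence it suffices to prove that both $R_D$ and $R_H$ are monotone in the standard sense, i.e.\ $N\subseteq N'$ implies $R_D(N)\subseteq R_D(N')$ and $R_H(N)\subseteq R_H(N')$. Equivalently, the joint choice functions $C_D$ and $C_H$ are substitutable.

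For $C_H$ I would argue pointwise over hospitals. Suppose $N\subseteq N'\subseteq E$ and pick any $(d,h)\in N\setminus C_H(N)=R_H(N)$. By the definition of $C_H$, this means $d\in N(h)\setminus C_h(N(h))$. Since $N(h)\subseteq N'(h)$, Proposition~\ref{prop:substitute} applied to the choice function $C_h$ induced from $(A(h), r_h, \succ_h)$ yields $d\in N'(h)\setminus C_h(N'(h))$, so $(d,h)\in R_H(N')$, as desired. For $C_D$ the same pointwise argument works, now over doctors: if $(d,h)\in N\setminus C_D(N)$, then $h\in N(d)$ and there exists some $h'\in N(d)$ with $h'\succ_d h$. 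Since $N(d)\subseteq N'(d)$, that same $h'$ certifies $h\neq \max_{\succ_d} N'(d)$, so $(d,h)\in N'\setminus C_D(N')=R_D(N')$.

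Combining, $R_H$ is monotone, which gives the first coordinate $E\setminus R_H(N_H)\supseteq E\setminus R_H(N'_H)$ from $N_H\subseteq N'_H$; and $R_D$ is monotone, which gives the second coordinate $E\setminus R_D(N_D)\subseteq E\setminus R_D(N'_D)$ from $N_D\supseteq N'_D$. These are precisely the two inclusions encoded by $F_I(N_D, N_H)\geq F_I(N'_D, N'_H)$, completing the proof.

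I do not expect any real difficulty here: the only substantive ingredient is the substitutability of each matroidal choice function $C_h$, which is already established in Proposition~\ref{prop:substitute}. The remaining content is bookkeeping, with the one subtlety being that the order $\geq$ reverses direction on the $H$-component, so one must be careful that $R_H$ is monotone in the ordinary inclusion sense (not the reversed one) in order for this reversal to cancel out against the complement in $E\setminus R_H(\cdot)$.
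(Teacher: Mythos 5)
Your proof is correct and follows exactly the route the paper indicates: the paper states this proposition as a cited result of Fleiner, prefaced only by the remark that substitutability of each $C_h$ (Proposition~\ref{prop:substitute}) implies it, and your argument fills in precisely that reduction, including the correct handling of the reversed order on the $H$-component.
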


The monotonicity of $F_{I}$ implies the existence of a stable matching as follows.
%
\begin{proposition}[Fleiner \cite{Fleiner01, Fleiner03}]
	\label{prop:algo}
	Let $I=(D, H, E, \succ_{DH}, \{\zero, r_{h}\}_{h\in H})$ be an instance such that each $r_{h}$
	is a matroid rank function.
	One can find a stable matching in $O(|E|\cdot {\rm EO}_{DH})$ time, where
	${\rm EO}_{DH}$ is a time to compute $C_{D}(N)$ and $C_{H}(N)$ for any $N\subseteq E$.
\end{proposition}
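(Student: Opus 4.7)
The plan is to constructively locate a fixed-point of $F_I$ on the finite lattice $(2^E \times 2^E, \geq)$ by iterating from the top element, and then invoke Proposition~\ref{prop:fixedpoint} to extract a stable matching. First I would initialize $(N_D^0, N_H^0) := (E, \emptyset)$---the unique maximum of $\geq$---and set $(N_D^{t+1}, N_H^{t+1}) := F_I(N_D^t, N_H^t)$ for $t \geq 0$. Since the initial pair is the maximum, the inequality $(N_D^0, N_H^0) \geq (N_D^1, N_H^1)$ is automatic, so Proposition~\ref{prop:monotone} together with induction produces a descending chain $(N_D^0, N_H^0) \geq (N_D^1, N_H^1) \geq (N_D^2, N_H^2) \geq \cdots$. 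Because the poset is finite, this chain stabilizes at some $(N_D^\ast, N_H^\ast)$, which is then a fixed-point of $F_I$, and Proposition~\ref{prop:fixedpoint} gives that $N_D^\ast \cap N_H^\ast$ is a stable matching of $I$.

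Next I would bound the number of iterations needed to reach the fixed-point. Along the chain, $N_D^t$ is non-increasing and $N_H^t$ is non-decreasing, both as subsets of $E$. At any index $t$ where the iteration has not yet halted, at least one of $N_D^{t+1} \subsetneq N_D^t$ or $N_H^{t+1} \supsetneq N_H^t$ must hold---otherwise both components agree with their images under $F_I$ and we are already at a fixed-point. Each such strict transition can be charged to a distinct element of $E$ (either one removed from $N_D$ or one added to $N_H$), so the total number of iterations is at most $|E| + |E| = 2|E|$.

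Finally, each iteration computes $R_D(N_D^t) = N_D^t \setminus C_D(N_D^t)$ and $R_H(N_H^t) = N_H^t \setminus C_H(N_H^t)$, which costs $O({\rm EO}_{DH})$ by assumption. Combined with the iteration bound, the total running time is $O(|E| \cdot {\rm EO}_{DH})$. The main obstacle is the iteration bound: the existence of a fixed-point and its interpretation as a stable matching follow immediately from Propositions~\ref{prop:monotone} and \ref{prop:fixedpoint}, but one needs the observation that starting from the top $(E,\emptyset)$ forces both coordinates to evolve monotonically as subsets of $E$, thereby capping the chain length linearly in $|E|$.
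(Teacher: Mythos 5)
Your proposal is correct and follows essentially the same route as the paper: iterate $F_I$ from the top element $(E,\emptyset)$, use monotonicity (Proposition~\ref{prop:monotone}) to get a descending chain that stabilizes at a fixed-point within $2|E|$ steps (the length of the longest chain in $2^E\times 2^E$ under $\geq$), and apply Proposition~\ref{prop:fixedpoint} to extract the stable matching. Your explicit charging argument for the $2|E|$ bound is just an unpacking of the paper's chain-length observation, so there is nothing substantively different.
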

\begin{proof}
	Since $(E,\emptyset)$ is the maximum in $2^{E}\times 2^{E}$ with respect to $\geq$, 
	we have $(E,\emptyset)\geq F_{I}(E,\emptyset)$.
	As $F_{I}$ is monotone by Proposition~\ref{prop:monotone}, then
	\[(E,\emptyset)\geq F_{I}(E,\emptyset)\geq F_{I}(F_{I}(E,\emptyset))\geq \cdots \geq F_{I}^{k}(E,\emptyset)\geq\cdots.\]
	Since $2^{E}\times 2^{E}$ is a finite lattice whose longest chain is of length $2|E|$,  
	we have $F_{I}^{k}(E,\emptyset)=F_{I}(F_{I}^{k}(E,\emptyset))$
	for some $k\leq 2|E|$.
	Then, $(N^{\ast}_{D}, N^{\ast}_{H}):=F_{I}^{k}(E,\emptyset)$ is a fixed-point of $F_{I}$ and,
	by Proposition~\ref{prop:fixedpoint},
	$N^{\ast}_{D}\cap N^{\ast}_{H}$ is a stable matching of $I$.
\end{proof}

Fleiner also provided the following structural result on
the set of stable matchings.
\begin{proposition}[Fleiner \cite{Fleiner01, Fleiner03}]
	\label{prop:rural}
	Let $I=(D, H, E, \succ_{DH}, \{\zero, r_{h}\}_{h\in H})$ be an instance 
	such that each $r_{h}$ is a matroid rank function.
	For any two stable matchings $M, M'\subseteq E$ of $I$ and any hospital $h\in H$,
	we have $|M(h)|=|M'(h)|$. 
\end{proposition}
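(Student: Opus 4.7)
The plan is to exploit the lattice fixed-point framework for $F_I$ together with the matroid-rank identity $|C_h(X)| = r_h(X)$ from Proposition~\ref{prop:rank}, showing that all stable matchings must induce the same load at every hospital.

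First, I would establish the implicit converse of Proposition~\ref{prop:fixedpoint}: every stable matching $M$ is realized as $M = C_D(N_D) = C_H(N_H) = N_D \cap N_H$ for some fixed point $(N_D, N_H)$ of $F_I$. The natural construction is to take $N_D := \{(d,h) \in E : h \succeq_d M(d)\}$, interpreted so that $N_D(d) = A(d)$ when $d$ is unmatched in $M$, together with $N_H := E \setminus R_D(N_D)$. Using the absence of blocking pairs for $M$ together with substitutability of each $C_h$ (Proposition~\ref{prop:substitute}), one verifies $F_I(N_D, N_H) = (N_D, N_H)$; the treatment of unmatched doctors relies on the fact that stability of $M$ forces $M(h)+d \notin \F(\zero, r_h)$ whenever $d$ is unmatched and $h \in A(d)$.

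Next, by Proposition~\ref{prop:monotone} and finiteness of the lattice, iterating $F_I$ from the top element $(E, \emptyset)$ of $(2^E \times 2^E, \geq)$ reaches the greatest fixed point $(N_D^\ast, N_H^\ast)$, and iterating from $(\emptyset, E)$ reaches the least one $(N_D^\circ, N_H^\circ)$; let $M^\ast$ and $M^\circ$ be the corresponding stable matchings. I would then count $|M^\ast|$ and $|M^\circ|$ two ways. From the doctor side, $|C_D(N_D)|$ equals the number of doctors with $N_D(d) \neq \emptyset$, so $N_D^\circ \subseteq N_D^\ast$ yields $|M^\circ| \leq |M^\ast|$. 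From the hospital side, Proposition~\ref{prop:rank} gives $|M(h)| = r_h(N_H(h))$, and monotonicity of each $r_h$ with $N_H^\ast \subseteq N_H^\circ$ yields $r_h(N_H^\ast(h)) \leq r_h(N_H^\circ(h))$, hence $|M^\ast| \leq |M^\circ|$. The two inequalities collapse to equality, and term-by-term comparison forces $r_h(N_H^\ast(h)) = r_h(N_H^\circ(h))$ for every $h$.

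Finally, any stable matching $M'$ corresponds to a fixed point $(N_D', N_H')$ sandwiched by the extremal ones, so $N_H^\ast \subseteq N_H' \subseteq N_H^\circ$; monotonicity of $r_h$ pins $r_h(N_H'(h))$ between two equal quantities, so $|M'(h)| = r_h(N_H'(h)) = |M^\ast(h)|$ for every $h$, which is the desired conclusion. The main obstacle is the first step — realizing each stable matching as a fixed point of $F_I$, particularly handling doctors who are unmatched in $M$, since one must convert the no-blocking-pair condition into the joint identities $C_D(N_D) = C_H(N_H) = N_D \cap N_H$ using both substitutability and the matroid exchange structure of each $\F(\zero, r_h)$.
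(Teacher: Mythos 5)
The paper does not prove Proposition~\ref{prop:rural} at all; it is quoted from Fleiner \cite{Fleiner01, Fleiner03}, so there is no internal proof to compare against. Your strategy is the standard lattice-theoretic one, and your second and third steps are sound: once every stable matching is realized by a fixed point of $F_I$, the two-sided count ($|C_D(N_D)|=|\{d: N_D(d)\neq\emptyset\}|$ on the doctor side, $|C_h(N_H(h))|=r_h(N_H(h))$ via Proposition~\ref{prop:rank} on the hospital side) applied to the greatest and least fixed points, together with the Knaster--Tarski sandwich $N_H^{\ast}\subseteq N_H'\subseteq N_H^{\circ}$ and monotonicity of $r_h$, does pin down $|M'(h)|$ independently of $M'$.

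The gap is in your first step, and it is not only the acknowledged difficulty with unmatched doctors: the construction is inverted. With $N_D:=\{(d,h)\in E : h\succeq_d M(d)\}$ (and $N_D(d)=A(d)$ for unmatched $d$), the doctors' choice $C_D(N_D)(d)=\max_{\succ_d}N_D(d)$ is $d$'s most preferred acceptable hospital, not $M(d)$; hence $C_D(N_D)\neq M$ unless $M$ gives every doctor her first choice, $N_H:=E\setminus R_D(N_D)$ then fails to contain $M$, and the fixed-point verification cannot go through. What you have written down is essentially the correct $N_H$, not $N_D$. The construction that works --- and the one the paper uses in the analogous step of Lemma~\ref{lem:1and2} --- is $N_D:=M\cup\{(d,h)\in E\setminus M : M(d)\succ_d h\}$ (so $N_D(d)=\emptyset$ when $d$ is unmatched) and $N_H:=M\cup(E\setminus N_D)$; then $C_D(N_D)=M$ holds by construction, and one must still prove $C_h(N_H(h))=M(h)$ for each $h$, which requires converting the no-blocking-pair conditions $M(h)+d\notin\F(\zero,r_h)$ and $M(h)+d-d'\notin\F(\zero,r_h)$ (for $d\succ_h d'$) into a statement about the induced choice function, in the spirit of Claim~\ref{claim:preparation2}. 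Once step 1 is repaired along these lines, the rest of your argument is correct.
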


\subsection{Proof of Theorem \ref{thm:g-mat1}}
\label{sec:proof4}
We are now ready to show Theorem~\ref{thm:g-mat1}.
Recall that $I$ and $I'$ are defined as 
\begin{align*}
&I=(D, H, E, \succ_{DH}, \{(p_{h}, q_{h})\}_{h\in H}),\\
&I'=(D, H, E, \succ_{DH}, \{(\zero,\overline{p_{h}})\}_{h\in H}),
\end{align*}
where $(p_{h}, q_{h})$ is g-matroidal (i.e., is paramodular and satisfies $0\leq p_{h}(B)\leq q_{h}(B)\leq |B|$) for each $h\in H$.
Here, $\overline{p_{h}}$ is the complement of $p_{h}$ defined as 
$\overline{p_{h}}(B)=p_{h}(A(h))-p_{h}(A(h)\setminus B)$. 
Observe the following basic fact of a g-matroidal pair.
\begin{claim}
	\label{claim:rankfunc}
	$\overline{p_{h}}$ is a matroid rank function 
	and $\overline{p_{h}}(A(h))=p_{h}(A(h))$ for each $h\in H$.
\end{claim}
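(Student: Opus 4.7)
The plan is to decompose the claim into its two assertions and dispatch the equality $\overline{p_h}(A(h)) = p_h(A(h))$ first as a warm-up: by definition $\overline{p_h}(A(h)) = p_h(A(h)) - p_h(\emptyset)$, and $p_h(\emptyset)=0$ follows from the standing bounds $0 \le p_h(\emptyset) \le q_h(\emptyset) \le |\emptyset|=0$. The substantial task is then to verify that $\overline{p_h}$ is a matroid rank function, i.e., to check (i) $0 \le \overline{p_h}(B) \le |B|$, (ii) monotonicity, and (iii) submodularity, in this order of increasing subtlety. I would handle (ii) and (iii) purely from supermodularity of $p_h$, and rely on the cross-inequality only for the upper bound in (i); this matches exactly how paramodularity appears in the argument.

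For submodularity, I would observe that complementation swaps unions with intersections: the inequality $\overline{p_h}(B)+\overline{p_h}(B') \ge \overline{p_h}(B\cup B')+\overline{p_h}(B\cap B')$ rewrites, after cancelling the $p_h(A(h))$ terms, into $p_h(A(h)\setminus B) + p_h(A(h)\setminus B') \le p_h((A(h)\setminus B)\cup(A(h)\setminus B')) + p_h((A(h)\setminus B)\cap(A(h)\setminus B'))$, which is precisely the supermodularity of $p_h$ applied to the two complements. For monotonicity, given $B \subseteq B'$, write $A(h)\setminus B$ as the disjoint union $(A(h)\setminus B') \sqcup (B'\setminus B)$ and apply supermodularity on these disjoint sets (super-additivity, which is supermodularity combined with $p_h(\emptyset)=0$) to get $p_h(A(h)\setminus B) \ge p_h(A(h)\setminus B') + p_h(B'\setminus B) \ge p_h(A(h)\setminus B')$, whence $\overline{p_h}(B) \le \overline{p_h}(B')$. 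Nonnegativity $\overline{p_h}(B) \ge 0$ then falls out of monotonicity together with $\overline{p_h}(\emptyset)=0$.

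The main obstacle is the upper bound $\overline{p_h}(B) \le |B|$: this does not follow from supermodularity alone, and is the one place where the cross-inequality is essential. My plan is to instantiate the cross-inequality $q_h(X)-p_h(Y) \ge q_h(X\setminus Y)-p_h(Y\setminus X)$ with $X = B$ and $Y = A(h)$, so that $X\setminus Y = \emptyset$ and $Y\setminus X = A(h)\setminus B$. This yields $q_h(B) - p_h(A(h)) \ge q_h(\emptyset) - p_h(A(h)\setminus B) = -p_h(A(h)\setminus B)$, and rearranging gives exactly $\overline{p_h}(B) = p_h(A(h))-p_h(A(h)\setminus B) \le q_h(B) \le |B|$, where the last inequality is the standing bound on $q_h$. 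Once this bound is in hand, the three defining properties of a matroid rank function are all established, completing the claim.
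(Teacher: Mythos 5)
Your proof is correct and follows the same overall route as the paper's: derive monotonicity of $p_h$ from supermodularity together with $p_h(\emptyset)=0$, obtain submodularity and monotonicity of $\overline{p_h}$ by complementation, and get the equality $\overline{p_h}(A(h))=p_h(A(h))$ from $p_h(\emptyset)=0$. The one substantive difference is your treatment of the upper bound $\overline{p_h}(B)\leq |B|$: the paper simply asserts $0\leq \overline{p_h}(B)\leq |B|$ after establishing supermodularity, monotonicity, and $0\leq p_h(B)\leq |B|$, whereas you correctly observe that these properties alone do not suffice (e.g., on $A=\{1,2\}$ the supermodular, monotone function with $p(\emptyset)=p(\{1\})=p(\{2\})=0$ and $p(A)=2$ satisfies $0\leq p(B)\leq|B|$ but has $\overline{p}(\{1\})=2>1$) and supply the missing step via the cross-inequality with $B'=A(h)$, giving $\overline{p_h}(B)\leq q_h(B)\leq |B|$. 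So your argument is not merely equivalent but fills in the one place where the paper's proof is genuinely incomplete as written; everything else matches.
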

\begin{proof}
	Since $(p_{h}, q_{h})$ is g-matroidal, $p_{h}$ is supermodular
	and $0\leq p_{h}(B)\leq |B|$ for every $B\subseteq A(h)$.
	Then, for any $B'\subseteq B\subseteq A(h)$, we have
	$p_{h}(B')\leq p_{h}(B')+p_{h}(B\setminus B')\leq p_{h}(B)+p_{h}(\emptyset)=p_{h}(B)$,
	i.e., $p_{h}$ is monotone. 
	Then, $\overline{p_{h}}$ is submodular, monotone, and 
	$0\leq \overline{p_{h}}(B)\leq |B|$ for every $B\subseteq A(h)$,
	i.e., $\overline{p_{h}}$ is a matroid rank function.
	Also, $\overline{p_{h}}(A(h))=p_{h}(A(h))-p_{h}(\emptyset)=p_{h}(A(h))$.
\end{proof}
By Claim~\ref{claim:rankfunc},
Propositions~\ref{prop:algo} and \ref{prop:rural} imply the following.
\begin{lemma}
	\label{lem:2and3}
	$I'$ has a stable matching. Also, 
	for any stable matchings $M$, $M'$ of $I$ and any hospital $h\in H$,
	we have $|M(h)|=|M'(h)|$.
\end{lemma}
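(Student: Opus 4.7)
My plan is to prove each of the two assertions separately, leveraging the matroid framework of Section~\ref{sec:proof3} and the g-matroid structure provided by paramodularity.

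For the first assertion — existence of a stable matching of $I'$ — I would directly invoke Proposition~\ref{prop:algo}. By Claim~\ref{claim:rankfunc}, each $\overline{p_h}$ is a matroid rank function, so $I'$ fits the hypothesis exactly. Iterating $F_{I'}$ from the maximum element $(E,\emptyset)$ of the finite lattice $(2^{E}\times 2^{E}, \geq)$ produces a monotone descending chain that terminates at a fixed point within $2|E|$ steps, and Proposition~\ref{prop:fixedpoint} identifies its intersection component as a stable matching of $I'$.

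For the second assertion — size-uniformity of stable matchings of $I$ — I would proceed by a direct symmetric-difference/exchange argument on $I$ itself, since Proposition~\ref{prop:rural} applies only to matroidal instances such as $I'$, and the lemma is needed in the course of proving Theorem~\ref{thm:g-mat1} so it cannot be invoked circularly. Assume for contradiction that two stable matchings $M_1, M_2$ of $I$ satisfy $|M_1(h^*)| > |M_2(h^*)|$ at some $h^*$. Consider the bipartite graph $(D, H; M_1\triangle M_2)$ and the connected component $G^*$ containing $h^*$. A counting identity of the type used in the proof of Theorem~\ref{thm:HR1} produces a doctor $d^*\in G^*$ unassigned in $M_2$, together with an alternating path $d^*\,h_0\,d_1\,h_1\,\dots\,d_k\,h_k = h^*$ with $(d_i, h_i)\in M_1\setminus M_2$ and $(d_{i+1}, h_i) \in M_2\setminus M_1$. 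Propagating preferences along this path exactly as in the proof of Theorem~\ref{thm:HR1}, and applying the simultaneous exchange axiom (B$^\natural$-EXC) of $\F(p_{h^*}, q_{h^*})$ to the pair $(M_1(h^*), M_2(h^*))$ with a witness element in $M_1(h^*)\setminus M_2(h^*)$, I would obtain either $M_2(h^*)+d\in\F(p_{h^*}, q_{h^*})$ or $M_2(h^*)+d-d'\in\F(p_{h^*}, q_{h^*})$ with $d\succ_{h^*} d'$, and hence a blocking pair for $M_2$ in $I$, contradicting the stability of $M_2$.

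The main obstacle is choreographing the exchange at $h^*$ so that it delivers a doctor $d$ for which the preference requirement of Definition~\ref{def:stableCSM} is met against $M_2(h^*)$; a naive application of (B$^\natural$-EXC) selects an exchange partner by cardinality but not by preference, so the preference chain traced along the alternating path must be used to single out the correct $d$. Paramodularity of $(p_h, q_h)$ is essential throughout: it both guarantees that $\F(p_h, q_h)$ is a generalized matroid (so that (B$^\natural$-EXC) is available) and safeguards the lower-quota constraints at every hospital visited on the path, which would otherwise block the exchange.
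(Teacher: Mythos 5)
Your first assertion is handled exactly as in the paper: Claim~\ref{claim:rankfunc} makes each $\overline{p_h}$ a matroid rank function, so Proposition~\ref{prop:algo} applies to $I'$ directly. No issue there.

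The second assertion is where things go wrong. The paper's proof is a one-line citation of Proposition~\ref{prop:rural} applied to $I'$: despite the statement's wording, the size-uniformity claim is about stable matchings of the \emph{matroidal} instance $I'=(D,H,E,\succ_{DH},\{(\zero,\overline{p_h})\}_{h\in H})$, not of $I$. This is forced by how the lemma is used --- it must yield the equivalence of conditions (b) and (c) of Theorem~\ref{thm:g-mat1}, both of which quantify over stable matchings of $I'$ --- and for $I'$ the property is immediate from Fleiner's result, with no new argument needed. You instead set out to prove a rural-hospitals theorem for stable matchings of $I$ itself, which is both unnecessary for the paper's purposes and not established by your sketch. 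Concretely: the counting step you import from the proof of Theorem~\ref{thm:HR1} rests on the chain $|M'(h)|\leq l_h\leq |M(h)|$, which holds for \emph{every} hospital $h$ because feasibility pins each $|M(h)|$ against a single scalar lower quota. In the CSM setting, feasibility only gives $M_1(h), M_2(h)\in\F(p_h,q_h)$, and two members of a generalized matroid need not be comparable in size --- indeed, uniform size is precisely what you are trying to prove --- so you have no analogue of $|N'(h)|\leq|N(h)|$ at the other hospitals of the connected component, and the existence of the doctor $d^*$ unassigned in $M_2$ does not follow. You also flag the exchange step at $h^*$ as "the main obstacle" without resolving it: (B$^{\natural}$-EXC) hands you an exchange partner chosen by the matroid structure, not one satisfying $d\succ_{h^*}d'$, and bridging that gap is the hard part of any such argument. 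The fix is simply to observe that the claim concerns $I'$ and cite Proposition~\ref{prop:rural}.
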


Lemma~\ref{lem:2and3} implies that $I'$ has a stable matching and
that conditions (b) and (c) in Theorem~\ref{thm:g-mat1} are equivalent.

What is left is to show that the condition (a) is also equivalent.
For this purpose, we prepare the following three claims.
The first and second claims are basic facts of paramodular functions \cite{Frankbook}.
The third one utilizes the exchange property of g-matroids (M$^{\natural}$-convex families).
\begin{claim}
	\label{claim:preparation1}
	For any $h\in H$ and $X\subseteq A(h)$, suppose that $|X|=\overline{p_{h}}(A(h))=p_{h}(A)$ holds.
	Then, we have $X\in \F(\zero, \overline{p_{h}})$ if and only if $X\in \F(p_{h}, q_{h})$.
\end{claim}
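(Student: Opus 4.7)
The plan is to exploit the fact that, under the size constraint $|X| = p_h(A(h))$, counts on a set $B$ and its complement are linked by $|X \cap B| = p_h(A(h)) - |X \cap (A(h)\setminus B)|$, which exactly mirrors the definition $\overline{p_h}(B) = p_h(A(h)) - p_h(A(h)\setminus B)$. So complementation converts upper-quota inequalities for $\overline{p_h}$ into lower-quota inequalities for $p_h$, and vice versa.

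For the $(\Leftarrow)$ direction I would start from $X\in \F(p_h,q_h)$. The only thing to check is the upper bound with respect to $\overline{p_h}$. For each $B\subseteq A(h)$, the lower-quota condition $p_h(A(h)\setminus B)\leq |X\cap(A(h)\setminus B)|$ combined with $|X|=p_h(A(h))$ yields
\[
|X\cap B| \;=\; p_h(A(h)) - |X\cap(A(h)\setminus B)| \;\leq\; p_h(A(h)) - p_h(A(h)\setminus B) \;=\; \overline{p_h}(B),
\]
and $|X\cap B|\geq 0$ is trivial. So $X\in \F(\zero,\overline{p_h})$.

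For the $(\Rightarrow)$ direction, assume $X\in \F(\zero,\overline{p_h})$. The lower-quota condition $p_h(B)\leq |X\cap B|$ comes out by applying the assumed upper bound to the complement: $|X\cap(A(h)\setminus B)|\leq \overline{p_h}(A(h)\setminus B) = p_h(A(h)) - p_h(B)$, and subtracting from $|X|=p_h(A(h))$. The remaining upper-quota condition $|X\cap B|\leq q_h(B)$ requires showing $\overline{p_h}(B)\leq q_h(B)$, i.e.\ $p_h(A(h)) \leq q_h(B) + p_h(A(h)\setminus B)$. This is precisely where paramodularity enters: instantiate the cross-inequality with $B\leftarrow B$ and $B'\leftarrow A(h)$,
\[
q_h(B)-p_h(A(h)) \;\geq\; q_h(B\setminus A(h))-p_h(A(h)\setminus B) \;=\; q_h(\emptyset)-p_h(A(h)\setminus B) \;=\; -p_h(A(h)\setminus B),
\]
which rearranges to exactly the needed inequality.

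I expect the only nontrivial step to be spotting the correct instance of the cross-inequality in the $(\Rightarrow)$ direction; the rest is bookkeeping via complementation. Note we never use the simultaneous exchange property here, only the defining paramodular inequalities for $(p_h,q_h)$ together with $q_h(\emptyset)=p_h(\emptyset)=0$, which is part of the standing assumption $0\leq p_h(B)\leq q_h(B)\leq |B|$.
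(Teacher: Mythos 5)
Your proof is correct and follows essentially the same route as the paper's: both directions use the complementation identity $|X\cap B|=|X|-|X\cap(A(h)\setminus B)|$ together with $|X|=p_h(A(h))$, and the upper-quota bound in the $(\Rightarrow)$ direction is obtained from exactly the same instance of the cross-inequality ($B'=A(h)$, giving $q_h(B)\geq p_h(A(h))-p_h(A(h)\setminus B)=\overline{p_h}(B)$). No gaps.
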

\begin{proof}
	We abbreviate $p_{h}$, $q_{h}$, $A(h)$ to $p$, $q$, $A$, respectively,
	and denote $\overline{B}:=A\setminus B$ for  $B\subseteq A$.
	
	To show the ``if'' part, suppose $X\in \F(p,q)$.
	Then $|X\cap \overline{B}|\geq p(\overline{B})$ for any $B\subseteq A$.
	Since $|X|=p(A)$, then $|X\cap B|=|X|-|X\cap \overline{B}|\leq p(A)-p(\overline{B})=\overline{p}(B)$.
	Thus, $X\in \F(\zero, \overline{p})$.
	
	To show the ``only if'' part, suppose $X\in \F(\zero, \overline{p})$.
	We show $p(B)\leq |X\cap B|\leq q(B)$ for any $B\subseteq A$.
	By the cross-inequality $q(B)-p(A)\geq q(B\setminus A)-p(A\setminus B)$,
	we have $q(B)\geq p(A)-p(\overline{B})$, which implies 
	$|X\cap B|\leq \overline{p}(B)=p(A)-p(\overline{B})\leq q(B)$, and thus $|X\cap B|\leq q(B)$. 
	Also, since $|X\cap \overline{B}|\leq \overline{p}(\overline{B})=p(A)-p(B)$,
	we have $|X\cap B|=|X|-|X\cap \overline{B}|\geq p(A)-(p(A)-p(B))= p(B)$.
	Thus, $|X\cap B|\geq p(B)$.
\end{proof}

Since $\overline{p_{h}}$ is a matroid rank function for each $h\in H$,
we can define the choice function 
$C_{h}:2^{A(h)}\to 2^{A(h)}$ induced from $(A(h), \overline{p_{h}},\succ_{h})$ as in Section~\ref{sec:proof3}. 
\begin{claim}
	\label{claim:preparation1.5}
	For any $h\in H$, let $C_{h}$ be the choice function induced from $(A(h), \overline{p_{h}},\succ_{h})$.
	For $Y\subseteq A(h)$ if there exists $X\in \F(p_{h}, q_{h})$ such that $X\subseteq Y$,
	then $|C_{h}(Y)|=p_{h}(A(h))$.
\end{claim}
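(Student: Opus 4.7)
The plan is to reduce everything to the rank formula for induced choice functions. By Proposition~\ref{prop:rank} applied to the matroid rank function $\overline{p_{h}}$ (which is a matroid rank function by Claim~\ref{claim:rankfunc}), we have $|C_{h}(Y)| = \overline{p_{h}}(Y)$. Since $\overline{p_{h}}(A(h)) = p_{h}(A(h))$ by the same claim, it suffices to show $\overline{p_{h}}(Y) = p_{h}(A(h))$.

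Unfolding the definition of the complement, $\overline{p_{h}}(Y) = p_{h}(A(h)) - p_{h}(A(h) \setminus Y)$, so the goal becomes $p_{h}(A(h) \setminus Y) = 0$. This is where the hypothesis $X \in \F(p_{h}, q_{h})$ with $X \subseteq Y$ enters: the lower-quota condition applied to the set $B := A(h) \setminus Y$ gives $p_{h}(B) \le |X \cap B|$, and since $X \subseteq Y$ means $X \cap (A(h) \setminus Y) = \emptyset$, we obtain $p_{h}(A(h) \setminus Y) \le 0$. Combined with non-negativity of $p_{h}$ (which is part of the g-matroidal assumption), this forces $p_{h}(A(h) \setminus Y) = 0$, and hence $|C_{h}(Y)| = \overline{p_{h}}(Y) = p_{h}(A(h))$.

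There is no real obstacle here; the argument is a routine unpacking of the definition of $\overline{p_{h}}$ together with the feasibility inequality for $X$ at the complementary set $A(h)\setminus Y$. The only subtlety worth flagging is that Proposition~\ref{prop:rank} is applied not to $Y$ alone but via the identity $|C_{h}(Y)| = \overline{p_{h}}(Y)$ (the $j=n$ case of the proposition, where $A_j = A(h)$ and $A_j \cap Y = Y$), so one should make this call explicit when writing out the proof.
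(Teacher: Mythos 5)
Your proof is correct, and it is actually more direct than the one in the paper. You apply Proposition~\ref{prop:rank} to $Y$ itself to get $|C_{h}(Y)|=\overline{p_{h}}(Y)$ and then show $\overline{p_{h}}(Y)=p_{h}(A(h))$ by observing that $p_{h}(A(h)\setminus Y)=0$, which follows from the feasibility inequality $p_{h}(A(h)\setminus Y)\leq |X\cap(A(h)\setminus Y)|=0$ together with nonnegativity. The paper instead runs a sandwich argument: it bounds $|C_{h}(Y)|$ from above by $\overline{p_{h}}(A(h))$ using Proposition~\ref{prop:feasible}, and from below by $|C_{h}(X)|=\overline{p_{h}}(X)$ using Propositions~\ref{prop:rank} and~\ref{prop:size-monotone}, and then shows $\overline{p_{h}}(X)=p_{h}(A(h))$ by the same vanishing-complement observation applied to $A(h)\setminus X$ rather than $A(h)\setminus Y$. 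The underlying key fact is identical in both arguments (feasibility of $X$ forces $p_{h}$ to vanish on the complement of any superset of $X$), but your route needs only the rank identity of Proposition~\ref{prop:rank}, whereas the paper's additionally invokes feasibility and size-monotonicity of the choice function; yours is the leaner of the two, while the paper's version has the mild advantage of reusing exactly the three propositions it has already set up for the subsequent claims.
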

\begin{proof}
	We abbreviate $p_{h}$, $q_{h}$, $A(h)$, $C_{h}$ to $p$, $q$, $A$, $C$, respectively.
	
	By Proposition~\ref{prop:feasible}, $C(Y)\in \F(\zero, \overline{p})$, and hence $|C(Y)|=|C(Y)\cap A|\leq \overline{p}(A)$.
	Also, Propositions~\ref{prop:rank} and \ref{prop:size-monotone} implies  $\overline{p}(X)=|C(X)|\leq |C(Y)|$.
	Since $X\in \F(p,q)$, we have $0\leq p(A\setminus X)\leq |X\cap (A\setminus X)|=0$, and hence 
	$\overline{p}(X)=p(A)-p(A\setminus X)=p(A)=\overline{p}(A)$.
	Combining these yields 
	$\overline{p}(A)\leq |C(X)|\leq \overline{p}(A)$, and hence $|C(X)|=\overline{p}(A)=p(A)$.
\end{proof}

\begin{claim}
	\label{claim:preparation2}
	For any $h\in H$, let $C_{h}$ be the choice function induced from $(A(h), \overline{p_{h}},\succ_{h})$.
	Suppose that  $X, Y\subseteq A(h)$ satisfy
	\begin{itemize}
		\setlength{\itemsep}{0mm}
		\item $X\in \F(p_{h}, q_{h})$ and $X\subseteq Y$, and 
		\item for every $d\in Y\setminus X$ and $d'\in X$,
		if $d\succ_{h} d'$, then $X+d-d'\not\in \F(p_{h}, q_{h})$.
	\end{itemize}
	Then, $C_{h}(Y)\subseteq X$.
\end{claim}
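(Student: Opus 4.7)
The plan is to derive a contradiction from the simultaneous exchange property of $\F(p_h, q_h)$ combined with the greedy construction of $C_h$. As preparation, I would first apply Claim~\ref{claim:preparation1.5} to conclude $|C_h(Y)| = p_h(A(h))$, and then use Claim~\ref{claim:preparation1} to upgrade this to $C_h(Y) \in \F(p_h, q_h)$. So both $X$ and $C_h(Y)$ lie in the g-matroid $\F(p_h, q_h)$, and (B$^\natural$-EXC) is available for this pair.

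Assuming for contradiction that $C_h(Y) \not\subseteq X$, I would pick any $d \in C_h(Y) \setminus X$ and apply (B$^\natural$-EXC) to $C_h(Y)$ and $X$ at the element $d$. The unary alternative $C_h(Y) - d \in \F(p_h, q_h)$ is immediately ruled out, since it would force $p_h(A(h)) \leq |C_h(Y) - d| = p_h(A(h)) - 1$. Hence the swap alternative must hold: some $d' \in X \setminus C_h(Y)$ satisfies $X + d - d' \in \F(p_h, q_h)$ and $C_h(Y) - d + d' \in \F(p_h, q_h)$. Because $d \in Y \setminus X$ and $d' \in X$, the second hypothesis of the claim forbids $d \succ_h d'$, so we must have $d' \succ_h d$.

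The crux is then to contradict the greedy optimality of $C_h(Y)$ using this ``good'' swap. I would set $C_h(Y)' := C_h(Y) - d + d'$ and, since $|C_h(Y)'| = p_h(A(h))$, invoke Claim~\ref{claim:preparation1} again to get $C_h(Y)' \in \F(\zero, \overline{p_h})$. Writing $d = e_{i^*}$ and $d' = e_{i'}$, the relation $d' \succ_h d$ gives $i' < i^*$, so $d \notin A_{i'-1}$ and therefore $C_h(Y) \cap A_{i'-1} \subseteq C_h(Y) - d$. Adding $d'$ yields $(C_h(Y) \cap A_{i'-1}) + d' \subseteq C_h(Y)'$, which is independent in the matroid $(A(h), \overline{p_h})$. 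Using Proposition~\ref{prop:rank} to identify $|C_h(Y) \cap A_{i'-1}| = \overline{p_h}(A_{i'-1} \cap Y)$ and plugging $B := A_{i'} \cap Y$ into the defining inequality of $\F(\zero, \overline{p_h})$, I would derive
\[
\overline{p_h}(A_{i'-1} \cap Y) + 1 \;=\; \bigl|(C_h(Y) \cap A_{i'-1}) + d'\bigr| \;\leq\; \overline{p_h}(A_{i'} \cap Y),
\]
so the greedy rule at index $i'$ selects $e_{i'} = d'$, forcing $d' \in C_h(Y)$, a contradiction.

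The hard part is the last paragraph: (B$^\natural$-EXC) only guarantees the \emph{existence} of some swap between $X$ and $C_h(Y)$, and one must show that this swap necessarily goes in the ``wrong direction'' for the greedy procedure. The key bridge is Claim~\ref{claim:preparation1}, which lets me pass back and forth between the g-matroidal constraint $\F(p_h, q_h)$ and the plain matroid $\F(\zero, \overline{p_h})$ whenever the set in question has size $p_h(A(h))$; this is what converts the abstract exchange $C_h(Y) - d + d' \in \F(p_h, q_h)$ into the concrete rank-increase $\overline{p_h}(A_{i'} \cap Y) > \overline{p_h}(A_{i'-1} \cap Y)$ that the greedy cannot miss.
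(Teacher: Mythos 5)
Your proof is correct and follows essentially the same route as the paper's: place $C_h(Y)$ in $\F(p_h,q_h)$ via Claims~\ref{claim:preparation1.5} and \ref{claim:preparation1}, apply \textbf{(B$^{\natural}$-EXC)} to $X$ and $C_h(Y)$ at some $d\in C_h(Y)\setminus X$, kill case (i) by cardinality, and extract the contradiction $d'\succ_h d$ versus $d\succ_h d'$. The only difference is at the last step, where the paper simply cites Proposition~\ref{prop:dominant} to conclude $d\succ_h d'$ from $C_h(Y)-d+d'\in\F(\zero,\overline{p_h})$, while you re-derive that fact inline from the greedy rank computation; both are valid and essentially the same argument.
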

\begin{proof}
	We abbreviate $p_{h}$, $q_{h}$, $A(h)$, $C_{h}$ to $p$, $q$, $A$, $C$, respectively.
	
	By Claim~\ref{claim:preparation1.5}, $X\in \F(p,q)$ and $X\subseteq Y$ imply 
	$|C(Y)|=p(A)=\overline{p}(A)$. Also, $C(Y)\in \F(\zero, \overline{p})$ by Proposition~\ref{prop:feasible}.
	Then, Claim~\ref{claim:preparation1} implies $C(Y)\in \F(p,q)$.
	Thus, $X, C(Y)\in \F(p,q)$.
	Suppose, to the contrary, $C(Y)\subsetneq X$. Then there is some $d\in C(Y)\setminus X$.
	By the symmetric exchange axiom {\bf (B$^{\natural}$-EXC)} for $C(X)$, $Y$, and $d$, we have either
	(i) $C(Y)-d, ~X+d\in \F(p,q)$, or 
	(ii) $\exists d'\in X\setminus C(Y): C(Y)-d+d',~X+d-d'\in \F(p,q)$.
	Note that (i) cannot hold since $C(Y)-d\not\in \F(p,q)$ follows from $|C(Y)-d|<|C(Y)|=p(A)$.
	Then, (ii) holds, i.e., 
	there exists $d'\in X\setminus C(Y)$ such that $C(Y)-d+d',~X+d-d'\in \F(p,q)$.
	
	By $d\in C(Y)\setminus X\subseteq Y\setminus X$ and $d'\in X$ and $X+d-d'\in \F(p,q)$
	the assumption on $Y$ implies $d'\succ_{h} d$.
	On the other hand,
	by $|C(Y)-d+d'|=|C(Y)|=p(A)$, Proposition~\ref{claim:preparation1} 
	and $C(Y)-d+d'\in \F(p,q)$ imply
	$C(Y)-d+d'\in \F(\zero, \overline{p})$. 
	As $d\in C(Y)\setminus X \subseteq C(Y)$ and $d'\in X\setminus C(Y)\subseteq Y\setminus C(Y)$, 
	this implies $d\succ_{h} d'$ by Proposition~\ref{prop:dominant}, a contradiction.
\end{proof}

We now complete the proof of Theorem~\ref{thm:g-mat1}
by showing the following lemma,
which states the equivalence between conditions (a) and (b) in Theorem~\ref{thm:g-mat1}.
\begin{lemma}
	\label{lem:1and2}
	$I$ has an envy-free matching if and only if 
	some stable matching $M'$ of $I'$ satisfies 
	$|M'(h)|=p_{h}(A(h))$ for all $h\in H$.
\end{lemma}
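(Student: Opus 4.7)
For the ``if'' direction I will use Claim~\ref{claim:preparation1} directly. Given a stable matching $M'$ of $I'$ with $|M'(h)| = p_h(A(h)) = \overline{p_h}(A(h))$ for every $h$, the claim upgrades each $M'(h)\in \F(\zero,\overline{p_h})$ to an element of $\F(p_h,q_h)$, so $M'$ is feasible in $I$. If some $d$ had justified envy toward $d'$ with $M'(d')=h$ in $I$, then $M'(h)+d-d'\in \F(p_h,q_h)$ at the same extremal size $p_h(A(h))$ would return to $\F(\zero,\overline{p_h})$ by Claim~\ref{claim:preparation1}, making $(d,h)$ a blocking pair of $M'$ in $I'$---a contradiction.

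For the ``only if'' direction, given $M$ envy-free in $I$, my plan is to build a stable matching $M'$ of $I'$ via Fleiner's monotone operator $F_{I'}$ from Section~\ref{sec:proof3}. For each $h\in H$ I set
\[N_H^0(h) := \{d\in A(h) : d\in M(h),\ \text{or } d\text{ unassigned in }M,\ \text{or } h\succ_d M(d)\},\]
and $N_D^0 := E\setminus R_H(N_H^0)$. Then $M(h)\subseteq N_H^0(h)$ trivially, and envy-freeness of $M$ is exactly the hypothesis of Claim~\ref{claim:preparation2} with $X=M(h)$ and $Y=N_H^0(h)$, so $C_h(N_H^0(h))\subseteq M(h)$; Claim~\ref{claim:preparation1.5} then gives $|C_h(N_H^0(h))|=p_h(A(h))$. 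The crucial step will be to verify $F_{I'}(N_D^0,N_H^0)\leq (N_D^0,N_H^0)$ in the partial order $\geq$ of Section~\ref{sec:proof3}: the first coordinate is equality by choice of $N_D^0$, and for the second I must show $N_H^0\cap R_D(N_D^0)=\emptyset$. If $(d,h)\in N_H^0\cap N_D^0$, combining $d\in N_H^0(h)$ with $(d,h)\notin R_H(N_H^0)$ forces $d\in C_h(N_H^0(h))\subseteq M(h)$, so $M(d)=h$; and for every $h'\succ_d h$ with $(d,h')\in E$, the fact $h'\succ_d M(d)$ places $d$ in $N_H^0(h')$, while $d\notin M(h')\supseteq C_{h'}(N_H^0(h'))$ puts $d$ into $R_H(N_H^0)$, so $h'\notin N_D^0(d)$. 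Thus $h$ is the top of $N_D^0(d)$, making $(d,h)\notin R_D(N_D^0)$.

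With the starting inequality in hand, Proposition~\ref{prop:monotone} produces a non-increasing sequence $(N_D^0,N_H^0)\geq F_{I'}(N_D^0,N_H^0)\geq F_{I'}^{2}(N_D^0,N_H^0)\geq\cdots$ in the finite lattice $2^E\times 2^E$, converging to a fixed point $(N_D^*,N_H^*)\leq (N_D^0,N_H^0)$. In particular $N_H^*\supseteq N_H^0\supseteq M$, so by Proposition~\ref{prop:fixedpoint} the stable matching $M':=C_H(N_H^*)$ of $I'$ satisfies $|M'(h)|=|C_h(N_H^*(h))|=p_h(A(h))$ by one more application of Claim~\ref{claim:preparation1.5} to $X=M(h)\subseteq N_H^*(h)$. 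The main obstacle is precisely the verification of the starting inequality, and it is where paramodularity is indispensable: both $C_h(N_H^0(h))\subseteq M(h)$ (Claim~\ref{claim:preparation2}) and the saturation $|C_h(N_H^0(h))|=p_h(A(h))$ (Claim~\ref{claim:preparation1.5}) rely on the g-matroid structure of $(p_h,q_h)$, without which the chain of reasoning that keeps $M(d)$ atop $N_D^0(d)$ would break down.
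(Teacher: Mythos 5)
Your proof is correct and follows essentially the same route as the paper: the ``if'' direction is identical (Claim~\ref{claim:preparation1} in both directions of the size-$p_h(A(h))$ equivalence), and the ``only if'' direction is the same fixed-point descent via $F_{I'}$ from a starting pair dominating a fixed point, using Claims~\ref{claim:preparation2} and \ref{claim:preparation1.5} exactly as the paper does. The only (harmless) deviation is your choice $N_D^0:=E\setminus R_H(N_H^0)$ in place of the paper's $N_D:=M\cup\{(d,h)\in E\setminus M\mid M(d)\succ_d h\}$, which makes the first coordinate of the starting inequality trivial at the cost of a slightly longer verification for the second; your verification of that second coordinate is sound.
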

\begin{proof}
	{\bf The ``if'' part:}
	Let $M'$ be a stable matching of $I'$ such that
	$|M'(h)|=p_{h}(A(h))$ for all $h\in H$.
	We show that $M'$ is also an envy-free matching of $I$.
	
	As $M'$ is  feasible for $I'$, we have $|M'(d)|\leq 1$ for every $d\in D$ and
	$M'(h)\in \F(\zero, \overline{p_{h}})$ for every $h\in H$.
	By Claim~\ref{claim:preparation1} and $|M'(h)|=p_{h}(A(h))$, 
	then $M'(h)\in \F(p_{h}, q_{h})$,
	and hence $M'$ is also a matching in $I$.
	Suppose, to the contrary, that there is a doctor $d\in D$ who has
	justified envy toward $d'\in D$ with $M'(d')=h$.
	Then, (i) $d$ is unassigned or $h\succ_{h} M'(d)$ and 
	(ii) $M'(h)+d-d'\in \F(p_{h}, q_{h})$ and $d\succ_{h} d'$.
	Note that $|M'(h)+d-d'|=|M'(h)|=p_{h}(A(h))$.
	Then, Claim~\ref{claim:preparation1} implies $M'(h)+d-d'\in \F(\zero, \overline{p_{h}})$.
	This means that $(d,h)$ is a blocking pair for $M'$ in $I'$, a contradiction.
	
	\medskip
	{\bf The ``only if'' part:}
	Suppose that $I$ has an envy-free matching $M$.
	We now construct a stable matching $M'$ of $I'$ 
	satisfying $|M'(h)|=p_{h}(A(h))$ for all $h\in H$.
	
	For $I'=(D, H, E, \succ_{DH}, \{(\zero,\overline{p_{h}})\}_{h\in H})$,
	define 
	$C_{D}, C_{H}:2^{E}\to 2^{E}$ as in Section~\ref{sec:proof3}. 
	That is, $C_{D}$ returns the set of each doctor's best choices 
	and $C_{H}$ is defined by combining $\{C_{h}\}_{h\in H}$, 
	where $C_{h}$ is induced from $(A(h), \overline{p_{h}},\succ_{h})$.
	From $C_{D}$ and $C_{H}$, we define $F_{I'}:2^{E}\times 2^{E}\to 2^{E}\times 2^{E}$ 
	as in Section~\ref{sec:proof3}. 
	Define two supersets $N_{D}, N_{H}\subseteq E$ of $M$ by
	\begin{align*}
	&N_{D}:=M\cup\set{(d,h)\in E\setminus M| M(d)\succ_{d} h},\\
	&N_{H}:=M\cup (E\setminus N_{D}).
	\end{align*}

	Note that $N_{H}\setminus M=E\setminus N_{D}$, and hence
	every $(d, h)\subseteq N_{H}\setminus M$ satisfies $(d,h)\not\in N_{D}$, and hence $h\succ_{d} M(d)$.
	Since $M$ is an envy-free matching, then for every $d'\in M(h)$ with $d\succ_{h} d'$ we have $M(h)+d-d'\not\in \F(p_{h}, q_{h})$,
	since otherwise $d$ has justified envy toward $d'$.
	Thus, we have
	\begin{itemize}
		\setlength{\itemsep}{0mm}
		\item $M(h)\in \F(p_{h}, q_{h})$ and $M(h)\subseteq N_{H}(h)$, and 
		\item for every $d\in N_{H}(h)\setminus M(h)$ and $d'\in M(h)$, if $d\succ_{h} d'$, then $M(h)+d-d'\not\in \F(p_{h}, q_{h})$.
	\end{itemize}
	Claim~\ref{claim:preparation2} then implies $C_{h}(N_{H}(h))\subseteq M(h)$ for each $h\in H$, and hence $C_{H}(N_{H})\subseteq M$.
	\begin{equation}
	\label{eq:ND}
	E\setminus R_{H}(N_{H})=(E\setminus N_{H})\cup C_{H}(N_{H})\subseteq (E\setminus N_{H})\cup M=N_{D}.
	\end{equation}
	Also, by the definition of $C_{D}$ and $N_{D}$, we have $C_{D}(N_{D})=M$, which implies
	\begin{equation}
	\label{eq:NH}
	E\setminus R_{D}(N_{D})=(E\setminus N_{D})\cup C_{D}(N_{D})=(E\setminus N_{D})\cup M=N_{H}.
	\end{equation}
	Recall the partial order $\geq$ defined on $2^{E}\times 2^{E}$ in Section~\ref{sec:proof3}.
	By \eqref{eq:ND} and \eqref{eq:NH}, we have
	\[(N_{D}, N_{H})\geq(E\setminus R_{H}(N_{H}), E\setminus R_{D}(N_{D}))=F_{I'}(N_{D}, N_{H}).\]
	Since $F_{I'}$ is monotone by Proposition~\ref{prop:monotone}, this implies
	\[(N_{D}, N_{H})\geq F_{I'}(N_{D}, N_{H})\geq F_{I'}(F_{I'}(N_{D}, N_{H}))\geq \cdots \geq F_{I'}^{k}(N_{D}, N_{H})\geq\cdots,\]
	and hence there is $k$ such that $F_{I'}^{k}(N_{D}, N_{H})$ is a fixed-point of $F_{I'}$.
	Denote it by $(N^{k}_{D}, N^{k}_{D})$ and define $M':=C_{H}(N^{k}_{H})$.
	By Proposition~\ref{prop:fixedpoint},  $M'$ is a stable matching of $I'$.
	
	What is left is to show $|M'(h)|=p_{h}(A(h))$ for all $h\in H$.
	Since $(N_{D}, N_{H})\geq F_{I'}^{k}(N_{D}, N_{H})=(N^{k}_{D}, N^{k}_{D})$, we have
	$N_{H}\subseteq N^{k}_{H}$, 
	Then $M\subseteq C_{H}(N_{H})\subseteq N_{H}\subseteq N^{k}_{H}$,
	and hence $M(h)\subseteq N^{k}_{H}(h)$ for each $h\in H$.
	By $M(h)\in \F(p_{h}, q_{h})$ and Claim~\ref{claim:preparation1.5}, 	$|M'(h)|=|C_{h}(N^{k}_{H}(h))|=p_{h}(A(h))$.
\end{proof}

Combining Lemmas~\ref{lem:2and3} and \ref{lem:1and2}
completes the proof of Theorem~\ref{thm:g-mat1}.

\subsection{Proof of Theorem \ref{thm:g-mat2}}
\label{sec:proof5}
We first show that  the ``while loop'' of  the algorithm {\sf EF-Paramodular-CSM} computes  
a stable matching of $I'=(D, H, E, \succ_{DH}, \{(\zero,\overline{p_{h}})\}_{h\in H})$.
By the proof of Proposition~\ref{prop:algo},
it suffices to show that, each iteration updates $(N_{D}, N_{H})$ to $F_{I'}(N_{D}, N_{H})$.
That is,  we show that the subsets $R_{D}$ and $R_{H}$ defined as
\begin{align*}
&\textstyle{R_{D}:= \bigcup_{d\in D} \set{(d,h) |h\in N_{D}(d),~h\neq \max_{\succ_{d}}N_{D}(d)},}\\
&\textstyle{R_{H}:= \bigcup_{h\in H} \set{(d,h)| d\in N_{H}(h),~ p_{h}(A(h)\setminus N_{H}(h)_{\succeq_{h} d})=p_{h}(A(h)\setminus N_{H}(h)_{\succ_{h} d})}}
\end{align*}
coincide with $N_{D}\setminus C_{D}(N_{D})$ and $N_{H}\setminus C_{H}(N_{H})$, respectively,
where $C_{D}$ and $C_{H}$ are defined for $I'$ as in Section~\ref{sec:proof3}.
By definition, $R_{D}=N_{D}\setminus C_{D}(N_{D})$ can be checked easily.
To show $R_{H}=N_{H}\setminus C_{H}(N_{H})$, recall the definition of $C_{H}$ in Section~\ref{sec:proof3}.
\[C_{H}(N)=\textstyle{\bigcup_{h\in H}\set{(d,h)| d\in N(h),~d\in C_{h}(N(h))}}\quad (N\subseteq E).\]
Here, each $C_{h}:2^{A(h)}\to 2^{A(h)}$ is a choice function induced from $(A(h), \overline{p_{h}}, \succ_{h})$. 
By definitions of $C_{h}$ and $\overline{p_{h}}$, for any $N\subseteq E$, we have
\begin{align*}
C_{h}(N(h))
&=\set{d\in N(h)| \overline{p_{h}}(N(h)_{\succeq_{h} d})>\overline{p_{h}}(N(h)_{\succ_{h} d})},\\
&=\set{d\in N(h)| p_{h}(A(h)\setminus N(h)_{\succeq_{h} d})<p_{h}(A(h)\setminus N(h)_{\succ_{h} d})}.
\end{align*}
By the monotonicity of $p_{h}$ (shown in the proof of Claim~\ref{claim:rankfunc}),
for any $d\in N(h)$, we have 
$p_{h}(A(h)\setminus N(h)_{\succeq_{h} d})\leq p_{h}(A(h)\setminus N(h)_{\succ_{h} d})$.
Then, for any $h\in H$, $N\subseteq E$, and $d\in N(h)$, 
\[d\in N(h)\setminus C_{h}(N(h)) \iff p_{h}(A(h)\setminus N(h)_{\succeq_{h} d})=p_{h}(A(h)\setminus N(h)_{\succ_{h} d}).\]
Thus, we have $R_{H}=N_{H}\setminus C_{H}(N_{H})$.

We now analyze the time complexity.
As shown in the proof of Proposition~\ref{prop:algo},
a stable matching is found by computing $F_{I'}$ at most $2|E|$ times,
i.e., the ``while loop'' is iterated $O(|E|)$ times.
Also, we see that each iteration can be done in $O(|E|)$ time.
Checking the condition $|M'(h)|=p_{h}(A(h))~(h\in H)$ is done in $O(|E|)$ time.
Thus, the algorithm runs in $O(|E|^2)$ time.
\end{document}